\newtheorem{theorem}{Theorem}
\newtheorem{assumption}[theorem]{Assumption}
\newtheorem*{conjecture*}{Conjecture}
\newtheorem{proposition}[theorem]{Proposition}
\newtheorem{remark}[theorem]{Remark}
\newcommand{\D}{\mathcal{D}}
\newcommand{\R}{\mathbb{R}}
\newcommand{\N}{\mathbb{N}}
\newcommand{\Z}{\mathbb{Z}}
\newcommand{\T}{\mathbb{T}}
\newcommand{\Fock}{\mathcal{F}}
\newcommand{\Number}{\mathcal{N}}
\newcommand{\cQ}{\mathcal{Q}}
\renewcommand{\Re}{\mathrm{Re}}
\newcommand{\FockO}{\mathbb{O}}
\newcommand{\FockP}{\mathbb{P}}
\newcommand{\FockQ}{\mathbb{Q}}
\newcommand{\hc}{\mathrm{h.c.}}
\newcommand{\sym}{\mathrm{sym}}
\newcommand{\Tr}{\mathrm{Tr}}
\newcommand{\exc}{\mathrm{exc}}
\newcommand{\extra}{\mathrm{extra}}
\newcommand{\binding}{\mathrm{binding}}
\newcommand{\bj}{\boldsymbol{j}}
\newcommand{\bm}{\boldsymbol{m}}
\renewcommand{\hat}[1]{\widehat{#1}}
\newcommand{\scp}[2]{\langle #1, #2 \rangle}
\newcommand{\SCP}[2]{\left\langle #1, #2 \right\rangle}
\renewcommand{\d}{\mathop{}\!\mathrm{d}}
\newcommand{\dx}{\d x}
\newcommand{\dy}{\d y}
\newcommand{\dGamma}{\mathop{}\!\mathrm{d} \Gamma}
\newcommand{\ad}{a^*}
\newcommand\mydots{,\makebox[1em][c]{.\hfil.\hfil.},}
\newcommand\mycdots{\makebox[1em][c]{$\cdot$\hfil$\cdot$\hfil$\cdot$}}
\DeclareFontFamily{OMX}{MnSymbolE}{}
\DeclareSymbolFont{MnLargeSymbols}{OMX}{MnSymbolE}{m}{n}
\DeclareFontShape{OMX}{MnSymbolE}{m}{n}{
    <-6>  MnSymbolE5
   <6-7>  MnSymbolE6
   <7-8>  MnSymbolE7
   <8-9>  MnSymbolE8
   <9-10> MnSymbolE9
  <10-12> MnSymbolE10
  <12->   MnSymbolE12
}{}
\DeclareFontShape{OMX}{MnSymbolE}{b}{n}{
    <-6>  MnSymbolE-Bold5
   <6-7>  MnSymbolE-Bold6
   <7-8>  MnSymbolE-Bold7
   <8-9>  MnSymbolE-Bold8
   <9-10> MnSymbolE-Bold9
  <10-12> MnSymbolE-Bold10
  <12->   MnSymbolE-Bold12
}{}
\let\llangle\@undefined
\let\rrangle\@undefined
\DeclareMathDelimiter{\llangle}{\mathopen}
                     {MnLargeSymbols}{'164}{MnLargeSymbols}{'164}
\DeclareMathDelimiter{\rrangle}{\mathclose}
                     {MnLargeSymbols}{'171}{MnLargeSymbols}{'171}
\newcommand{\Vext}{V^\mathrm{ext}}
\newcommand{\hH}{h}
\newcommand{\eH}{e_\mathrm{H}}
\newcommand{\UNp}{U_{N,\varphi}}
\newcommand{\UNmop}{U_{N-1,\varphi}}
\newcommand{\FockH}{\mathbb{H}}
\newcommand{\tildeFockH}{\widetilde{\mathbb{H}}}
\newcommand{\QP}{\mathrm{QP}}
\newcommand{\FockHz}{\FockH_0}
\newcommand{\FockHjo}{\FockH_{j_1}}
\newcommand{\FockHjnu}{\FockH_{j_\nu}}
\newcommand{\FockR}{\mathbb{R}}
\newcommand{\Chi}{{\boldsymbol{\chi}}}
\renewcommand{\P}{\mathbb{P}}
\newcommand{\Q}{\mathbb{Q}}
\newcommand{\Pz}{\P_0}
\newcommand{\Qz}{\Q_0}
\newcommand{\boldK}{\mathbb{K}}
\newcommand{\boldKz}{\mathbb{K}_0}
\newcommand{\boldKo}{\mathbb{K}_1}
\newcommand{\boldKt}{\mathbb{K}_2}
\newcommand{\boldKth}{\mathbb{K}_3}
\newcommand{\boldKf}{\mathbb{K}_4}
\title{A Note on the Binding Energy for Bosons in the \\ Mean-field Limit}
\author{Lea Boßmann\thanks{Institut für Mathematik, Johannes Gutenberg-Universität Mainz, Staudingerweg 9, 55128 Mainz, Germany; and Department of Mathematics, Ludwig-Maximilians-Universität München, Theresienstr.\ 39, 80333 München, Germany. Email: \texttt{bossmann@uni-mainz.de}},
Nikolai Leopold\thanks{Department of Mathematics and Computer Science, University of Basel, Spiegelgasse 1, 4051 Basel,	Switzerland. Email: \texttt{nikolai.leopold@unibas.ch}},
David Mitrouskas\thanks{Institute of Science and Technology Austria (ISTA), Am Campus 1, 3400 Klosterneuburg, Austria. Email: \texttt{david.mitrouskas@ist.ac.at}}, and 
Sören Petrat\thanks{School of Science, Constructor University Bremen, Campus Ring 1, 28759 Bremen, Germany. Email: \texttt{spetrat@constructor.university}}}
\date{\today}
\begin{document}
\maketitle

\begin{abstract}
We consider a gas of $N$ weakly interacting bosons in the ground state. Such gases exhibit Bose--Einstein condensation. The binding energy is defined as the energy it takes to remove one particle from the gas. In this article, we prove an asymptotic expansion for the binding energy, and compute the first orders explicitly for the homogeneous gas. Our result addresses in particular a conjecture by Nam [Lett.\ Math.\ Phys., 108(1):141--159, 2018], and provides an asymptotic expansion of the ionization energy of bosonic atoms.
\end{abstract}

\section{Introduction and Main Results}

We consider the $N$-particle Hamiltonian
\begin{equation}\label{H_N_w}
H(N,w) = \sum_{i=1}^N T_i + \sum_{1 \leq i < j \leq N} w(x_i-x_j),
\end{equation}
with $T_i = -\Delta_i + \Vext(x_i)$ and $\Vext,w:\R^d\to\R$, as an operator on the bosonic Hilbert space
\begin{equation}
\mathcal{H}^N_{\sym} = \big( L^2(\Omega) \big)^{\otimes_s N},
\end{equation}
with $\otimes_s$ the symmetric tensor product. We let the dimension $d \geq 1$ and distinguish two cases:
\begin{itemize}
\item $\Omega = \R^d$, with $\Vext(x) \to \infty$ as $|x| \to \infty$. In this case we call the system the \textbf{trapped Bose gas}.
\item $\Omega = \mathbb{T}^d$, the unit torus. In this case we set $\Vext = 0$ and call the system the \textbf{homogeneous Bose gas}.
\end{itemize}
We are interested in the mean-field limit, i.e., an interaction 
\begin{equation}
w = \lambda_N v, ~~\text{with}~\lambda_N := (N-1)^{-1},
\end{equation}
and $v:\R^d\to\R$. The spectral properties of \eqref{H_N_w} in the mean-field limit have been extensively studied; let us refer to \cite{LSSY} for a more general review of the mean-field and more singular models. The leading order of the energy is described by the Hartree energy functional \eqref{Hartree_functional}. More recently, the next-to leading order of the low-lying eigenvalues and the corresponding eigenfunctions has been understood rigorously in terms of Bogoliubov theory, see \cite{seiringer2011,grech2013,lewin2014,lewin2015_2,mitrouskas_PhD} for recent results, and \cite{bogoliubov1947} for Bogoliubov's original paper. The eigenfunctions in Bogoliubov theory are described in terms of quasi-free states (and the ground state is exactly a quasi-free state). This allows in particular a perturbative expansion around Bogoliubov theory with coefficients that can be explicitly computed, see \cite{spectrum}. In this article we explore the consequences of this perturbative expansion in more detail by proving an expansion not just for the energy of an $N$-body system, but for the binding energy. If the many-body system is an atom, this quantity is known as the ionization energy.

Let us denote the ground state energy, i.e., the lowest eigenvalue of $H(N,w)$, by $E(N,w)$. The binding energy is the energy necessary to remove one particle from the ground state, i.e., it is defined as
\begin{equation}\label{def_binding_energy}
\Delta E (N,\lambda_N v) := E(N,\lambda_N v) - E(N-1,\lambda_N v).
\end{equation}
Here, we assume that the Bose gases of $N$ and $N-1$ particles have the same coupling constant $\lambda_N$. In \cite{nam2018}, it was proven by Nam that for the homogeneous Bose gas
\begin{align}\label{Nam_result}
\Delta E (N,\lambda v) = \lambda (N-1) \hat{v}(0) + \frac{1}{N} \Bigg( e_B - \sum_{\substack{p \in (2\pi \Z)^d \\ p \neq 0}} \frac{p^2 \alpha_p^2}{1-\alpha_p^2} + o(1) \Bigg)
\end{align}
in the limit $N\to \infty$ and $\lambda N \to 1$, where
\begin{align}\label{alpha_p}
\alpha_p := \frac{\hat{v}(p)}{p^2 + \hat{v}(p) + \sqrt{p^4 + 2p^2 \hat{v}(p)}}, ~~ e_B := -\frac{1}{2} \sum_{\substack{p \in (2\pi \Z)^d \\ p \neq 0}} \alpha_p \hat{v}(p),
\end{align}
and $\hat{v}(p) := \int v(x) e^{-ipx} \dx$ for all $p \in (2\pi \Z)^d$ denotes the Fourier transform of $v$. The result holds for even and bounded $v$ with nonnegative Fourier transform. We improve this result in two directions:
\begin{itemize}
\item We prove an asymptotic expansion of $\Delta E (N,\lambda_N v)$ in powers of $\lambda_N$.
\item We prove this expansion for both the homogeneous and the trapped Bose gas.
\end{itemize}
Note that Nam mentioned an extension of \eqref{Nam_result} to trapped bosons as an open problem and set up a conjecture about this generalization, see \cite[Conjecture~6]{nam2018}. We address this problem in particular with Theorem~\ref{theorem_main} and elaborate on the conjecture in Remark~\ref{remark_Nam_conjecture} and Section~\ref{sec_next_order_Nam}.

The proof of an asymptotic expansion of the binding energy has become possible through the work \cite{spectrum}, where asymptotic expansions for the ground state, low energy excited states, and their corresponding energies have been proven. Our article is an application of that expansion for the ground state energy. Note that the work \cite{spectrum} was in turn inspired by an analogous result for the dynamics \cite{QF}; see also the follow-up work \cite{FLMP2021}. Let us refer to \cite{proceedings} and \cite{detlef_proceedings} for reviews of both results, and note that in \cite{CLT} the results from \cite{spectrum} are applied to derive an Edgeworth expansion for the fluctuations of bounded one-body operators with respect to the ground state and low-energy excited states of the weakly interacting Bose gas.

In order to state our main results we need a few technical assumptions. These are the same assumptions that were made for proving the asymptotic expansion of the ground state and the ground state energy in \cite{spectrum}. We briefly list and explain these assumptions here and refer to \cite[Section~2.1]{spectrum} for more details.

\begin{assumption}\label{Assumption_1}
Let $\Vext:\R^d\to\R$ be measurable, locally bounded and non-negative and let $\Vext(x)$ tend to infinity as $|x|\to  \infty$, i.e.,
\begin{equation}
\inf\limits_{|x|>R}\Vext(x)\to\infty \text{ as } R\to \infty\,.
\end{equation}
\end{assumption}
This assumption implies in particular that $\Vext$ is confining.

\begin{assumption}\label{Assumption_2}
Let $v:\R^d\to\R$ be measurable with $v(-x)=v(x)$ and $v\not\equiv 0$, and assume that there exists a constant $C>0$ such that, in the sense of operators on $\cQ(-\Delta)=H^1(\R^d)$,
\begin{equation}
|v|^2\leq  C\left(1-\Delta\right)\,.  \label{eqn:ass:v:2:Delta:bound}
\end{equation}
Besides, assume that $v$ is of positive type, i.e., that it has a non-negative Fourier transform.
\end{assumption}
Together, Assumptions \ref{Assumption_1} and \ref{Assumption_2} imply self-adjointness of $H(N,\lambda v)$ for any $\lambda \in \R$ (by Kato--Rellich). Let us recall that it has been proven in many settings that weakly interacting bosons exhibit Bose--Einstein condensation, which means a macroscopic occupation of the one-particle state $\varphi \in L^2(\Omega)$. In our setting the condensate wave function $\varphi$ is the minimizer of the Hartree energy functional
\begin{equation}\label{Hartree_functional}
\mathcal{E}_{\mathrm{H}}[\phi] := \int \Big( |\nabla \phi(x)|^2 + \Vext(x) |\phi(x)|^2 \Big) \dx + \frac{1}{2} \int v(x-y) |\phi(x)|^2 |\phi(y)|^2 \dx \dy.
\end{equation}
The corresponding Hartree energy is $\eH := \inf_{\phi \in H^1(\Omega), \|\phi\|=1} \mathcal{E}_{\mathrm{H}}[\phi] = \mathcal{E}_{\mathrm{H}}[\varphi]$. Assumptions \ref{Assumption_1} and \ref{Assumption_2} imply all necessary properties of the Hartree minimizer $\varphi$, in particular its existence and uniqueness, and the existence of a spectral gap above the ground state of the one-body Hartree operator $h = T + v*|\varphi|^2$.

\begin{assumption}\label{Assumption_3}
Assume that there exist constants $C_1\geq0$ and $0<C_2\leq 1$, as well as a function $\varepsilon:\N\to\R_0^+$ with 
$$\lim\limits_{N\to\infty} N^{-\frac13}\varepsilon(N) \leq C_1\,,$$
such that 
\begin{equation}\label{eqn:ass:cond}
H(N,\lambda_N v)-N\eH\geq C_2 \sum\limits_{j=1}^N\hH_j-\varepsilon(N)
\end{equation}
in the sense of operators on $\D(H(N,\lambda_N v))$.
\end{assumption}
Assumptions~\ref{Assumption_2} and \ref{Assumption_3} hold in particular for any bounded even $v$ with nonnegative Fourier transform \cite{grech2013}, and for the three-dimensional repulsive Coulomb potential $v(x) = |x|^{-1}$ \cite{lewin2015_2}. Assumption~\ref{Assumption_3} ensures complete Bose--Einstein condensation of the $N$-body state in the Hartree minimizer $\varphi$ with a sufficiently good rate. With these assumptions we can state our main results.

\begin{theorem}\label{theorem_main}
Consider the trapped Bose gas, i.e., the Hamiltonian
\begin{equation}\label{H_N_theorem}
H(N,\lambda_Nv ) = \sum_{i=1}^N \big(-\Delta_i + \Vext(x_i)\big) + \lambda_N \sum_{1 \leq i < j \leq N} v(x_i-x_j),
\end{equation}
and let Assumptions~\ref{Assumption_1}, \ref{Assumption_2}, and \ref{Assumption_3} hold. Then, for any $a \in \N$, the binding energy as defined in \eqref{def_binding_energy} has an expansion
\begin{equation}\label{main_expansion}
\Delta E (N,\lambda_N v) =  \sum_{j=0}^a \lambda_N^{j} E^{\mathrm{binding}}_j + O(\lambda_N^{a+1}).
\end{equation}
We have
\begin{equation}
E_0^\binding = \eH + \frac{1}{2} \scp{\varphi}{\big(v*|\varphi|^2\big) \varphi} = \scp{\varphi}{\big(-\Delta + \Vext + v*|\varphi|^2 \big) \varphi},
\end{equation}
and
the coefficients $E^{\mathrm{binding}}_j$ for $j\geq 1$ are stated in Proposition~\ref{lemma_energy_expansions}. 
\end{theorem}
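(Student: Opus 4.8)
The plan is to reduce the statement to the asymptotic expansion of the ground state energy established in \cite{spectrum}, applied once to the $N$-body and once to the $(N-1)$-body system, and then to subtract the two expansions. For the first term $E(N,\lambda_N v)$ the Hamiltonian $H(N,\lambda_N v)$ is exactly of the mean-field form treated in \cite{spectrum} (coupling $\lambda_N=(N-1)^{-1}$), so its energy expansion in powers of $\lambda_N$ is available directly. The delicate point is $E(N-1,\lambda_N v)$: this system has $N-1$ particles but carries the coupling $\lambda_N=(N-1)^{-1}$, which is \emph{not} the natural mean-field coupling $\lambda_{N-1}=(N-2)^{-1}$ for $N-1$ particles. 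Essentially the whole difficulty of the theorem is concentrated in handling this mismatch, since, as shown below, it already affects the leading coefficient $E_0^\binding$.

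To treat $E(N-1,\lambda_N v)$ I would pass to the effective Hartree problem with shifted coupling. Evaluating the interaction of $N-1$ particles at coupling $\lambda_N$ in a product state $\phi^{\otimes(N-1)}$ produces the effective one-body coupling $(N-2)\lambda_N=1-\lambda_N$, so the relevant leading energy is $(N-1)\,e(1-\lambda_N)$, where $e(g):=\inf_{\norm{\phi}=1}\big[\scp{\phi}{(-\Delta+\Vext)\phi}+\tfrac{g}{2}\scp{\phi}{(v*|\phi|^2)\phi}\big]$ is the Hartree energy per particle at coupling $g$, with $e(1)=\eH$. Since the coupling shift is only $O(\lambda_N)$, Assumptions~\ref{Assumption_1}--\ref{Assumption_3} and the existence, uniqueness and spectral gap of the effective Hartree minimizer persist, which lets me run the \cite{spectrum} expansion with $g=1-\lambda_N$ in place of $g=1$; the coefficients are the same functionals of the (shifted) minimizer, hence equal to those of the $N$-body problem up to corrections of order $\lambda_N$.

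Collecting powers of $\lambda_N$ in the difference of the two expansions then yields \eqref{main_expansion} with remainder $O(\lambda_N^{a+1})$, inherited from the two input expansions. For the leading coefficient I would Taylor expand $e(1-\lambda_N)=\eH-\lambda_N e'(1)+O(\lambda_N^2)$ and use the Feynman--Hellmann identity $e'(1)=\tfrac12\scp{\varphi}{(v*|\varphi|^2)\varphi}=:\tfrac12 W$. Since $(N-1)\lambda_N=1$, the term $(N-1)e(1-\lambda_N)$ equals $\lambda_N^{-1}\eH-\tfrac12 W+O(\lambda_N)$, whereas $E(N,\lambda_N v)=\lambda_N^{-1}\eH+\eH+\kappa_0+O(\lambda_N)$ with $\kappa_0$ the Bogoliubov constant of \cite{spectrum}; the $\lambda_N^{-1}$ contributions cancel, the Bogoliubov constants cancel up to $O(\lambda_N)$ (same functional, coupling differing by $O(\lambda_N)$), and the surviving order-$\lambda_N^0$ term is $\eH+\tfrac12 W=E_0^\binding$. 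The two displayed forms of $E_0^\binding$ coincide because $\eH=\scp{\varphi}{(-\Delta+\Vext)\varphi}+\tfrac12 W$.

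The main obstacle is the rigorous control of the coupling mismatch: I must verify that the expansion of \cite{spectrum} is stable under the $O(\lambda_N)$ shift of the effective coupling, uniformly in $N$, so that all coefficients differ from their $N$-body counterparts by genuine $O(\lambda_N)$ terms and the remainders combine to a single $O(\lambda_N^{a+1})$. Equivalently, one may bypass the shifted Hartree problem and write $E(N-1,\lambda_N v)=E(N-1,\lambda_{N-1}v)+\int_{\lambda_{N-1}}^{\lambda_N}\scp{\Psi_\lambda}{\big(\sum_{i<j}v(x_i-x_j)\big)\Psi_\lambda}\,\d\lambda$ by Feynman--Hellmann, with $\Psi_\lambda$ the ground state at coupling $\lambda$; here the interval has length $|\lambda_N-\lambda_{N-1}|=O(\lambda_N^2)$ while the interaction expectation is of size $O(\lambda_N^{-2})$, so the correction is $O(1)$ and must be expanded to the required order rather than discarded. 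Either route demands the same careful bookkeeping of the two-parameter dependence of the energy on $(N,\lambda)$; once this is in place, the existence of the expansion and the values of the coefficients $E_j^\binding$ for $j\ge1$ follow from \cite{spectrum} and are recorded in Proposition~\ref{lemma_energy_expansions}.
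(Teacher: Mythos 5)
Your overall strategy --- expand $E(N,\lambda_N v)$ and $E(N-1,\lambda_N v)$ separately and subtract --- is exactly the paper's, and your identification of the coupling mismatch $(N-2)\lambda_N = 1-\lambda_N$ as the heart of the problem is correct (it is what produces the $-\tfrac{1}{2}\scp{\varphi}{(v*|\varphi|^2)\varphi}$ shift in $E_0^\binding$, and your leading-order bookkeeping via Feynman--Hellmann is right). The genuine gap is in how you handle this mismatch beyond leading order. You propose to rerun the expansion of \cite{spectrum} around the Hartree minimizer at shifted coupling $g=1-\lambda_N$ and then assert that the coefficients ``are the same functionals of the (shifted) minimizer, hence equal to those of the $N$-body problem up to corrections of order $\lambda_N$.'' For the theorem at arbitrary order $a$ this is not enough: every coefficient of the shifted expansion --- the minimizer $\varphi_g$, the Hartree energy $e(g)$, the Bogoliubov Hamiltonian, its ground state, and all perturbation-theoretic coefficients built from them --- depends on $\lambda_N$ through $g$, and some terms carry prefactors of size $\lambda_N^{-1}$; for instance, knowing $(N-1)\,e(1-\lambda_N)$ to precision $O(\lambda_N^{a+1})$ requires $e(g)$ to be expanded to order $\lambda_N^{a+2}$. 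So you would have to prove smoothness of the entire Bogoliubov and Rayleigh--Schr\"odinger apparatus in the coupling to arbitrary order, with uniform remainder control, and then re-expand everything in powers of $\lambda_N$; ``stability under an $O(\lambda_N)$ shift'' is an assertion of precisely the hard part, not a proof of it. Your alternative route via $\int_{\lambda_{N-1}}^{\lambda_N}\scp{\Psi_\lambda}{\sum_{i<j}v(x_i-x_j)\,\Psi_\lambda}\,\d\lambda$ has the same status: since the integrand is $O(\lambda_N^{-2})$ and the interval has length $O(\lambda_N^2)$, reaching the error $O(\lambda_N^{a+1})$ demands an order-$(a+1)$ expansion of the two-particle expectation, uniformly over couplings $\lambda$ that are not of mean-field form --- again exactly the unproved core, which you flag as ``the main obstacle'' but do not resolve.

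The paper sidesteps this entirely by a device missing from your proposal: it conjugates $H(N-1,\lambda_N v)$ with the excitation map around the \emph{unshifted} minimizer $\varphi$ (coupling $1$), writing $(N-2)\lambda_N v = v-\lambda_N v$. Since $\varphi$ does not solve the Hartree equation for the shifted interaction, the term $\FockH^{\extra}$ no longer vanishes, and the mismatch appears as explicit, $N$-independent perturbations: $\boldKth(v)$ is replaced by $\widetilde{\boldK}_3(v)=\boldKth(v)-a^*\big(q(v*|\varphi|^2)\varphi\big)$, and $\widetilde{\FockH}_2$ acquires the extra term $-\dGamma\big(q[v*|\varphi|^2-\scp{\varphi}{v*|\varphi|^2\varphi}]q\big)$. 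Both expansions then share the \emph{same} Bogoliubov Hamiltonian $\FockH_0$, ground state $\Chi_0$, and reduced resolvents $\FockO_k$, so the Rayleigh--Schr\"odinger machinery of \cite{spectrum} applies verbatim once the a priori bounds are rechecked for $\widetilde{\boldK}_3$ and $\widetilde{\FockH}_j$, and the coefficients $E_j^\binding=E_j-\widetilde{E}_j$ come out in closed form against a fixed reference state. Until you either carry out the coupling-smoothness program in full or adopt a fixed-reference device of this kind, your argument establishes $E_0^\binding$ but not the expansion \eqref{main_expansion} to arbitrary order with the coefficients of Proposition~\ref{lemma_energy_expansions}.
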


\begin{proof}
The theorem follows from the corresponding expansions for $E(N,\lambda_N v)$ and $E(N-1,\lambda_N v)$ in Proposition~\ref{lemma_energy_expansions}.
\end{proof}

\begin{remark}\label{remark_Nam_conjecture}
Let us compare this result with \cite[Conjecture~6]{nam2018}. Note that here we have adapted the conjecture to our notation.
\begin{conjecture*}[{\cite[Conjecture~6]{nam2018}}]\label{Nams_conjecture}
Under appropriate conditions on $T$ and $v$,
\begin{equation}
E(N,\lambda v) - E(N-1,\lambda v) = A + CN^{-1} + o(N^{-1})
\end{equation}
as $N \to \infty$ and $\lambda N \to 1$, with coefficients $A$ and $C$ as given in \cite[Section~5]{nam2018} (or see Section~\ref{sec_next_order_Nam}).
\end{conjecture*}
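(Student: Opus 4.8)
The plan is to read Conjecture~\ref{Nams_conjecture} off Theorem~\ref{theorem_main} by specializing the expansion parameter and then matching the resulting coefficients to Nam's. First I would apply Theorem~\ref{theorem_main} with $a=1$, which gives
\begin{equation}
\Delta E(N,\lambda_N v) = E_0^\binding + \lambda_N E_1^\binding + O(\lambda_N^2),
\end{equation}
with $\lambda_N = (N-1)^{-1}$. Nam's regime $\lambda N \to 1$ corresponds precisely to the choice $\lambda = \lambda_N = (N-1)^{-1}$, since then $\lambda N = N/(N-1) \to 1$. I would therefore convert the expansion in $\lambda_N$ into one in $N^{-1}$ using $\lambda_N = N^{-1} + O(N^{-2})$: this turns $\lambda_N E_1^\binding$ into $E_1^\binding N^{-1} + O(N^{-2})$ and absorbs the remainder $O(\lambda_N^2) = O(N^{-2})$ into $o(N^{-1})$. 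The outcome is exactly the conjectured form $A + C N^{-1} + o(N^{-1})$ with $A = E_0^\binding$ and $C = E_1^\binding$; in fact Theorem~\ref{theorem_main} yields the sharper remainder $O(N^{-2})$.

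It then remains to identify $A$ and $C$ with the coefficients from \cite[Section~5]{nam2018}. For $A$ this is immediate from the closed form $E_0^\binding = \scp{\varphi}{(-\Delta+\Vext+v*|\varphi|^2)\varphi}$ in Theorem~\ref{theorem_main} (note that $E_0^\binding = \scp{\varphi}{\hH\varphi}$ is just the Hartree chemical potential); for the homogeneous gas $\varphi \equiv 1$ gives $A = \hat{v}(0)$, matching the leading term $\lambda(N-1)\hat{v}(0)$ of \eqref{Nam_result} at $\lambda(N-1)=1$. For $C = E_1^\binding$ I would insert the explicit formula supplied by Proposition~\ref{lemma_energy_expansions} and, in the homogeneous case, reduce it by diagonalizing the underlying Bogoliubov Hamiltonian in Fourier space, aiming to show $E_1^\binding = e_B - \sum_{p\neq 0} p^2\alpha_p^2(1-\alpha_p^2)^{-1}$ with $\alpha_p$ and $e_B$ as in \eqref{alpha_p}, i.e.\ agreement with the $N^{-1}$-coefficient of \eqref{Nam_result}. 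This reduction is what I would carry out in detail in Section~\ref{sec_next_order_Nam}.

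The main obstacle is this coefficient matching, not the elementary passage to the $N^{-1}$-expansion. There are two points to watch. First, $E_1^\binding$ is a \emph{difference} of the first-order corrections of the $N$- and $(N-1)$-body problems, so the cancellations between the two quadratic (Bogoliubov) Hamiltonians have to be tracked jointly rather than computed separately. Second, because the $(N-1)$-particle energy is evaluated at the coupling $\lambda_N = (N-1)^{-1}$ of the $N$-particle problem rather than at its own natural value $(N-2)^{-1}$, the two expansions are referenced to Hartree and Bogoliubov data that differ at order $N^{-1}$, and these differences feed into $C$; one must verify they assemble into the single closed expression conjectured by Nam. For the trapped gas the quadratic form cannot be diagonalized explicitly, so the comparison with Nam's conjectured coefficients must be made at the abstract level, via the formulas of Proposition~\ref{lemma_energy_expansions} rather than an $\alpha_p$-type closed form.
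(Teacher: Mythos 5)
There is a genuine gap, and it sits precisely at the step you defer as ``coefficient matching'': the paper's own conclusion is that this matching \emph{fails} for the trapped gas, so the conjecture with Nam's specific coefficient $C$ cannot be established along your route. Your sentence ``the conjectured form $A + CN^{-1} + o(N^{-1})$ with $A = E_0^\binding$ and $C = E_1^\binding$'' silently redefines $C$; the conjecture fixes $C$ to be Nam's expression, which in the paper's notation reads $C = \scp{\Chi_0}{\FockH_0 \Chi_0} - \scp{\Chi_0}{\dGamma\big( q [ T - \scp{\varphi}{T\varphi} ] q \big) \Chi_0}$, see \eqref{def_coeff_C}, whereas the paper proves in Section~\ref{sec_next_order_Nam}, Equation~\eqref{E_1_binding}, that
\begin{equation*}
E_1^\binding = C + 2 \Re \scp{\Chi_0}{\FockH_1^\extra \FockO \FockH_1 \Chi_0} - \scp{\Chi_0}{\FockH_1^\extra \FockO \FockH_1^\extra \Chi_0},
\qquad
\FockH_1^\extra = a^*\big( q \big( v*|\varphi|^2 \big) \varphi \big) + \hc,
\end{equation*}
and the two extra terms vanish only when $q(v*|\varphi|^2)\varphi = 0$, i.e., for the homogeneous gas. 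The mechanism is the one your second ``point to watch'' gestures at but does not carry through: because the $(N-1)$-body problem is taken at the coupling $\lambda_N$ of the $N$-body problem, the Hartree minimizer $\varphi$ does not solve the $(N-1)$-body Hartree equation, $q\,h\big((N-2)\lambda_N v\big)\varphi = -\lambda_N\, q\big(v*|\varphi|^2\big)\varphi \neq 0$ in general, so the excitation Hamiltonian retains a nonvanishing linear term $\FockH^{\extra}$ that enters at order $\lambda_N^{1/2}$ and yields $\widetilde{\FockH}_1 = \FockH_1 - \FockH_1^\extra$; second-order perturbation theory then produces exactly the cross terms above in $E_1^\binding = E_1 - \widetilde{E}_1$. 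Your plan, which matches coefficients term by term against Nam's $C$, would break down at this point for any nonconstant $\varphi$.

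What survives of your proposal is correct and agrees with the paper: the expansion does have the form $A + C' N^{-1} + O(N^{-2})$ (Theorem~\ref{theorem_main} with $a=1$, converting $\lambda_N$ to $N^{-1}$), the identification $A = E_0^\binding = \scp{\varphi}{h\varphi} + \mu$-type Hartree chemical potential holds, and in the homogeneous case your Bogoliubov-diagonalization plan is exactly the paper's Section~\ref{sec_homogeneous_comp}: there $\FockH_1^\extra = 0$, and one obtains $E_1^\binding = E_0 - \sum_{k\neq 0} k^2 \gamma_k^2 = e_B - \sum_{p \neq 0} p^2 \alpha_p^2 (1-\alpha_p^2)^{-1}$, confirming the conjecture in that special case, consistent with \eqref{E_1_binding_theorem}. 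But a faithful treatment of the statement must end by \emph{correcting} it rather than proving it: the paper even pinpoints Nam's error, namely that $B_N = 2\Re\scp{\Chi}{a^*(q(v*|\varphi|^2)\varphi)\Chi}$ is of order $N^{-1/2}$ rather than $o(N^{-1/2})$, and that Nam's two-sided bounds $C + N^{1/2}B_N + o(1) \leq E_1^\binding \leq C + N^{1/2}B_N + 2D + o(1)$ bracket the true value $C + N^{1/2}B_N + D$, with $D = -\scp{\Chi_0}{\FockH_1^\extra \FockO \FockH_1^\extra \Chi_0} \geq 0$.
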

In particular, $A = E_0^\binding$. However, the conjectured coefficient $C$ is in general not equal to $E_1^\binding$, except for the homogeneous Bose gas. We elaborate on this in Section~\ref{sec_next_order_Nam}.
\end{remark}

\begin{remark} Note that Theorem \ref{theorem_main} also applies to bosonic atoms, where the binding energy is referred to as ionization energy \cite{Bach1991}. An atom with $N$ spinless ``bosonic electrons'' and $Z$ nuclei is described by the Hamiltonian
\begin{align}
H^{\rm atom}_{N,Z} = \sum_{i=1}^N \bigg( -\Delta_i - \frac{Z}{|x_i|} \bigg) +   \sum_{1 \le i<j \le N } \frac{1}{|x_i-x_j|},
\end{align}
acting on $\mathcal{H}^N_{\sym}$. Rescaling the coordinates $x_i \to \lambda_N x_i $ and setting $ t = ( N -1) / Z $ leads to
\begin{align}\label{bosonic_atoms}
\lambda_N^2 H^{\rm atom}_{N,t} =  \sum_{i=1}^N \bigg( -\Delta_i -  \frac{ 1 }{ t |x_i|} \bigg) +  \lambda  _N \sum_{1\le i<j \le N } \frac{1}{|x_i-x_j|}.
\end{align}
We consider the limit where $N\to\infty$ with $t$ fixed. It is known \cite{Lieb1981} that there is a critical $t_c\in (1,2)$ such that for $t\le t_c$, the quantum problem and the corresponding Hartree energy functional have unique ground states. That the first-order contribution of the ground state energy is given by $\inf \sigma (H_{N, t}) = N e_H(t) + o(N) $ as $N\to \infty$, where $e_H(t)$ is the infimum of the corresponding Hartree energy functional, was proved by Benguria and Lieb \cite{benguria-lieb1983}. Bach \cite{Bach1991} showed that the first-order contribution to the ionization energy can be described as well in terms of the Hartree energy. In \cite{lewin2015_2,nam_PhD}, it was then shown that the low-energy eigenvalues of $H_{N, t}$ below the essential spectrum are determined by Bogoliubov theory. As explained in \cite[Remark 3.6]{spectrum} the bosonic atom meets all the required criteria for an asymptotic expansion of the low-energy eigenvalues in inverse powers of $\lambda_N$, similarly as in the case of confined bosons. Since the proof of Theorem \ref{theorem_main} is entirely based on the asymptotic expansion of the low-energy eigenvalues, it also applies to the Hamiltonian \eqref{bosonic_atoms} for bosonic atoms, and thus provides an asymptotic expansion for the ionization energy.
\end{remark}

\begin{remark}
Just as the results of \cite{spectrum}, Theorem~\ref{theorem_main} holds under more general assumptions than Assumptions~\ref{Assumption_1}, \ref{Assumption_2}, and \ref{Assumption_3}. These are the assumptions (A1) and (A2) in \cite{lewin2015_2}, our Assumption~\ref{Assumption_3} (which is slightly stronger than (A3s) from \cite{lewin2015_2}), and Inequality~\eqref{eqn:ass:v:2:Delta:bound}. We refer to \cite[Remark 3.6]{spectrum} for more details. These more general assumptions can be satisfied for interactions $v$ that are not of positive type, for example, the two-dimensional Coulomb gas discussed in \cite[Sec.~3.2]{lewin2015_2}, where $v(x) = -\log |x|$.
\end{remark}

For the homogeneous case, $E^{\mathrm{binding}}_0 = \hat{v}(0)$ can be concluded from \cite{seiringer2011}, and 
\begin{equation}\label{E_1_binding_theorem}
E^\binding_1 = e_B - \sum_{\substack{p \in (2\pi \Z)^d \\ p \neq 0}} \frac{p^2 \alpha_p^2}{1-\alpha_p^2} = - \sum_{\substack{p \in (2\pi \Z)^d \\ p \neq 0}} \hat{v}(p) \frac{\alpha_p}{1+\alpha_p}
\end{equation}
is already known from \cite{nam2018}. We compute here the next coefficient $E^{\mathrm{binding}}_2$. In the following theorem all summations are over the lattice $(2\pi \Z)^d$.

\begin{theorem}\label{theorem_homogeneous}
For the homogeneous Bose gas the expansion \eqref{main_expansion} from Theorem~\ref{theorem_main} is true under Assumption~\ref{Assumption_2} with $\R^d$ replaced by $\T^d$ and Assumption~\ref{Assumption_3}. The second-order coefficient is given by
\begin{align}\label{E_2_binding_theorem}
\begin{split}
E_2^\binding &= \sum_{k \neq 0} \frac{k^2 \gamma_k \sigma_k}{\varepsilon(k)} \Big( k^2 \gamma_k \sigma_k - f(k) \Big) + 6 \sum_{\substack{k,\ell \neq 0 \\ k + \ell \neq 0}} \left(\frac{(k+\ell)^2 g_2(k,\ell)}{\varepsilon(k) + \varepsilon(\ell) + \varepsilon(k+\ell)}\right) \times \\
&\qquad\qquad\qquad\qquad\qquad\qquad \left( \frac{2 \sigma_{k+\ell} \gamma_{k+\ell} g_1(k,\ell)}{\varepsilon(k+\ell)} - \frac{3 \big(\sigma_{k+\ell}^2 + \gamma_{k+\ell}^2\big) g_2(k,\ell)}{\varepsilon(k) + \varepsilon(\ell) + \varepsilon(k+\ell)} \right),
\end{split}
\end{align}
with 
\begin{align}\label{epsilon_alpha_sigma_gamma}
\varepsilon(p) := \sqrt{p^4 + 2p^2\hat{v}(p)}, ~ \alpha_p := \frac{\hat{v}(p)}{p^2 + \hat{v}(p) + \sqrt{p^4 + 2p^2 \hat{v}(p)}}, ~ \sigma_p := \frac{1}{\sqrt{1-\alpha_p^2}}, ~ \gamma_p := \alpha_p \sigma_p,
\end{align}
and
\begin{subequations}
\begin{align}
\begin{split}
f(k) &:= - \sum_{\substack{\ell \neq 0 \\ \ell \neq k}} \hat{v}(k-\ell) \gamma_\ell \Big( \sigma_k^2 \sigma_\ell + 2 \sigma_k \gamma_\ell \gamma_k + \sigma_\ell \gamma_k^2 \Big) - \hat{v}(k) (\sigma_k - \gamma_k)^2 \sum_{\ell \neq 0} \gamma_\ell^2 \\
&\quad - 2 \sigma_k \gamma_k \sum_{\ell \neq 0} \hat{v}(\ell) \gamma_\ell (\sigma_\ell - \gamma_\ell) +  2\hat{v}(k) \gamma_k (\sigma_k - \gamma_k)^3 + \frac{1}{2} \hat{v}(k) \big( \sigma_k^2 + \gamma_k^2 \big),
\end{split} \\
\begin{split} 
g_1(k,\ell) &:= \frac{1}{2} \Bigg[ \hat{v}(k) \big(\sigma_{k+\ell} \sigma_\ell + \gamma_{k+\ell} \gamma_\ell\big)\big(\sigma_k - \gamma_k\big) + \hat{v}(\ell) \big(\sigma_{k+\ell} \sigma_k + \gamma_{k+\ell} \gamma_k\big)\big(\sigma_\ell - \gamma_\ell\big) \\
&\qquad - \hat{v}(k+\ell) \big(\sigma_\ell\gamma_k + \sigma_k\gamma_\ell\big)\big(\sigma_{k+\ell} - \gamma_{k+\ell}\big) \Bigg],
\end{split} \\
\begin{split}
g_2(k,\ell) &:= -\frac{1}{6} \Bigg[ \hat{v}(k) \big(\gamma_{k+\ell} \sigma_\ell + \sigma_{k+\ell} \gamma_\ell\big)\big(\sigma_k - \gamma_k\big) + \hat{v}(\ell) \big(\gamma_{k+\ell} \sigma_k + \sigma_{k+\ell} \gamma_k\big)\big(\sigma_\ell - \gamma_\ell\big) \\
&\qquad + \hat{v}(k+\ell) \big(\sigma_\ell\gamma_k + \sigma_k\gamma_\ell\big)\big(\sigma_{k+\ell} - \gamma_{k+\ell}\big) \Bigg].
\end{split}
\end{align}
\end{subequations}

\end{theorem}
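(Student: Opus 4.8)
The plan is to feed the abstract energy expansion of \cite{spectrum} (collected in Proposition~\ref{lemma_energy_expansions}) into the homogeneous setting and to read off the coefficient of $\lambda_N^2$. On $\T^d$ the Hartree minimizer is the constant mode and is independent of the particle number, so everything can be diagonalized once and for all in Fourier space. After the Bogoliubov c-number substitution $a_0\mapsto\sqrt{M-\Np}$, the $M$-particle Hamiltonian becomes, up to the explicit constant $\tfrac12\lambda_N M(M-1)\hat v(0)$, an excitation operator whose dependence on the pair $(M,\lambda_N)$ enters, at the relevant orders, only through the effective coupling $\beta_M:=\lambda_N(M-1)$: the quadratic part carries interaction strength $\beta_M\hat v(p)$, the cubic part scales like $\lambda_N^{1/2}$ and the quartic part like $\lambda_N$. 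The decisive point is that $\beta_N=1$ sits at the mean-field value while $\beta_{N-1}=1-\lambda_N$ is shifted, and since $\lambda_N$ is common to both systems, the binding energy is a finite difference in this single shifted argument.

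Writing the excitation ground-state energy as $\mathcal G(\beta,\lambda_N)=G_0(\beta)+\lambda_N G_1(\beta)+O(\lambda_N^2)$, with $G_0$ the Bogoliubov ground-state energy of the quadratic part and $G_1$ the first correction, a Taylor expansion of the finite difference gives
\[
\Delta E(N,\lambda_N v)=\hat v(0)+\lambda_N\,G_0'(1)+\lambda_N^2\Big(G_1'(1)-\tfrac12 G_0''(1)\Big)+O(\lambda_N^3).
\]
Here the bulk terms $G_0(1)$ and $G_1(1)$ cancel and only $\beta$-derivatives at $\beta=1$ survive; in particular the order-$\lambda_N^2$ energy coefficient $G_2$ drops out entirely, which is what makes an explicit computation feasible. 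One checks that $E_0^\binding=\hat v(0)$ and that $E_1^\binding=G_0'(1)$ reproduces \eqref{E_1_binding_theorem}, so the task is reduced to evaluating $G_1'(1)-\tfrac12 G_0''(1)$. First I would diagonalize the quadratic part with the coefficients $\sigma_p,\gamma_p$ of \eqref{epsilon_alpha_sigma_gamma}, producing the dispersion $\varepsilon(p)$ and the Bogoliubov vacuum; expressing the cubic term through the quasi-particle operators and normal-ordering then identifies its vacuum-to-three-quasi-particle amplitude with $g_2(k,\ell)$, with energy denominator $D:=\varepsilon(k)+\varepsilon(\ell)+\varepsilon(k+\ell)$, and a one-quasi-particle amplitude with $g_1(k,\ell)$, while the quartic expectation together with the normal-ordering remainders produces the pair weight $k^2\gamma_k\sigma_k$ and the function $f(k)$.

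The remaining step is to differentiate in $\beta$ at $\beta=1$, and here the computation is organized by the elementary identities $\partial_\beta\sigma_p=\gamma_p\,\partial_\beta\theta_p$, $\partial_\beta\gamma_p=\sigma_p\,\partial_\beta\theta_p$ for the Bogoliubov angle, together with $\partial_\beta\theta_p=p^2\sigma_p\gamma_p/\varepsilon(p)$. Applying $-\tfrac12\partial_\beta^2$ to $G_0$ and $\partial_\beta$ to the single-sum (quartic and one-mode) part of $G_1$ produces the single sum $\sum_{k\neq0}\tfrac{k^2\gamma_k\sigma_k}{\varepsilon(k)}\big(k^2\gamma_k\sigma_k-f(k)\big)$, the prefactor $k^2\gamma_k\sigma_k/\varepsilon(k)=\partial_\beta\theta_k$ being exactly the one-mode weight generated by the derivative. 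Applying $\partial_\beta$ to the three-quasi-particle part $\sum_{k,\ell}g_2(k,\ell)^2/D$ yields the double sum: differentiating the amplitude $g_2$ sends it to the swapped amplitude $g_1$ routed through the $(k+\ell)$-mode (producing the term with $\sigma_{k+\ell}\gamma_{k+\ell}\,g_1/\varepsilon(k+\ell)$), while differentiating the denominator $D$ produces the term with $(\sigma_{k+\ell}^2+\gamma_{k+\ell}^2)\,g_2/D$, and symmetrizing over the three momenta $k,\ell,-(k+\ell)$ supplies the overall factor $6$. Collecting these gives \eqref{E_2_binding_theorem}.

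The hard part will not be any single estimate but the bookkeeping and the cancellations. One must set up $\mathcal G(\beta,\lambda_N)$ so that the only surviving $M$-dependence at orders $\lambda_N^0$ and $\lambda_N^1$ is genuinely through $\beta_M$, which means controlling the excitation-number cutoff $\Np\le M$, the $-\Np$ corrections hidden in the c-number substitution, and the $M$-dependent prefactors of the cubic and quartic vertices, and showing that each of these either contributes at $O(\lambda_N^3)$ or produces terms that cancel against one another. One then has to track every contribution to $G_1'(1)$ and $G_0''(1)$, verify that all half-integer powers of $\lambda_N$ vanish by the odd parity of the cubic vertex, and check that the many derivative terms collapse to the compact form \eqref{E_2_binding_theorem}. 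The remainder $O(\lambda_N^{a+1})$ and the truncation are controlled by the estimates of \cite{spectrum}, while convergence of all momentum sums follows from the operator bound in Assumption~\ref{Assumption_2}.
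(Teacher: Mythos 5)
Your strategy is sound and would prove the theorem, but it organizes the computation genuinely differently from the paper. The paper runs two parallel Rayleigh--Schr\"odinger expansions around the \emph{same} Bogoliubov Hamiltonian $\FockH_0$ --- one for $H(N,\lambda_N v)$, one for $\widetilde{H}(N-1)=H(N-1,\lambda_N v)$ --- and computes $E_2^\binding=E_2-\widetilde{E}_2$ from the operator identities \eqref{H_tilde_1_4_torus}, which collapse the difference to the four matrix elements in \eqref{E_2_binding}, then evaluates these by Bogoliubov diagonalization. Your finite-difference-in-$\beta$ scheme is precisely the differential version of those identities: on the torus one has exactly $\lambda_N(M-\Np)=\beta_M+\lambda_N(1-\Np)$ and $\lambda_N\sqrt{M-\Np}=\sqrt{\lambda_N}\sqrt{\beta_M+\lambda_N(1-\Np)}$, so $\widetilde{\FockH}_2-\FockH_2=-\partial_\beta\FockH_0(\beta)\big|_{\beta=1}$ and $\widetilde{\FockH}_3-\FockH_3=-\partial_\beta\big(\sqrt{\beta}\,\FockH_1\big)\big|_{\beta=1}$; correspondingly your $-\tfrac12 G_0''(1)$ is the paper's term $-\scp{\Chi_0}{\dGamma(qTq)\FockO\dGamma(qTq)\Chi_0}$ (Hellmann--Feynman plus second-order perturbation theory in $V=\FockH_0-\dGamma(qTq)$, using $\FockO V\Chi_0=-\FockO\dGamma(qTq)\Chi_0$), and $G_1'(1)$ reproduces the other three terms. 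What your route buys is a conceptual explanation of why the order-$\lambda_N^2$ coefficient $G_2$ never enters and why the weight $k^2\sigma_k\gamma_k/\varepsilon(k)=\partial_\beta\theta_k\big|_{\beta=1}$ appears; the explicit amplitude computations ($f(k)=2H^{\QP}_{2,\ad\ad}(k)$, $g_1=H^{\QP}_{1,\ad a a}$, $g_2=H^{\QP}_{1,\ad\ad\ad}$) are identical in both approaches.

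Two caveats. First, since the finite difference is taken at $\beta$-separation exactly $\lambda_N$, you need the remainder in $\mathcal{G}(\beta,\lambda_N)=G_0(\beta)+\lambda_N G_1(\beta)+O(\lambda_N^2)$ (and every higher remainder, for general $a$) to be Lipschitz in $\beta$ \emph{uniformly} in $\lambda_N$; establishing this amounts to re-verifying the estimates of \cite{spectrum} for the shifted-coupling Hamiltonian --- which is exactly the content of Proposition~\ref{lemma_energy_expansions}, items (a) and (b) --- so your parametrization does not bypass that work, and citing \cite{spectrum} alone is not enough. Second, your fine-grained attribution in the double sum is off: at $\beta=1$ one has $\partial_\beta\varepsilon(p)=2p^2\sigma_p\gamma_p$, \emph{not} $p^2(\sigma_p^2+\gamma_p^2)$, so the $(\sigma_{k+\ell}^2+\gamma_{k+\ell}^2)$ factor does not come from differentiating the denominator $D$; it arises from the insertion of the diagonal part of $U_B\dGamma(qTq)U_B^*$ between three-quasiparticle states, i.e., from the piece $-\FockO\big(\dGamma(qTq)-\scp{\Chi_0}{\dGamma(qTq)\Chi_0}\big)\FockO$ of $\partial_\beta\FockO(\beta)$, and this only emerges after the explicit-prefactor term $+\scp{\Chi_0}{\FockH_1\FockO\FockH_1\Chi_0}$ (from $\partial_\beta\beta$) cancels against the $-\FockO$ piece of the resolvent derivative, $\FockO(\FockH_0-E_0)\FockO=-\FockO$. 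This cancellation is the derivative counterpart of the paper's cubic cross term generated by $\widetilde{\FockH}_3=\FockH_3-\tfrac12\FockH_1$; your sketch glosses over it, but with it in place the bookkeeping closes and yields \eqref{E_2_binding_theorem}.
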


\begin{proof}
The quantity $E^\binding_2$ on the torus is computed in Section~\ref{sec_homogeneous_comp}.
\end{proof}

Note that our analysis can be extended to excited states in a similar way but we do not pursue this here. An interesting open problem would be to prove an expression for the binding energy in the more singular Gross--Pitaevskii regime (see, e.g., \cite{LSSY} and \cite{boccato2018}), where in three dimensions $w(x) = N^2 v(Nx)$ for suitable $N$-independent $v$.

\begin{remark}
Note in particular that $E_0^\binding \geq 0$ and $E_1^\binding \leq 0$. The sign of $E_2^\binding$ is not in general evident. However, for an interaction $\hat{v}_\Lambda(k) := \hat{v}\big(\frac{k}{\Lambda}\big)$ with $\Lambda >0$ large a straightforward computation yields the scaling behavior
\begin{align}
E_2^\binding(\Lambda) = \underbrace{\sum_{k \neq 0} \frac{k^2 \gamma_k(\Lambda) \sigma_k(\Lambda)}{\varepsilon_k(\Lambda)} \sum_{\substack{\ell \neq 0 \\ \ell \neq k}} \hat{v}\Big(\frac{k-\ell}{\Lambda}\Big) \gamma_\ell(\Lambda) \sigma_k(\Lambda)^2 \sigma_\ell(\Lambda)}_{= O(\Lambda^2)} \, + \, O(\Lambda),
\end{align}
and thus we can conclude that $E_2^\binding(\Lambda) \geq 0$ for $\Lambda$ large enough.
\end{remark}

The rest of the article is organized as follows. In Section~\ref{sec_main_theorem_proof}, we prove Proposition~\ref{lemma_energy_expansions} which immediately implies the proof of Theorem~\ref{theorem_main}. More concretely, in Section~\ref{sec_Ham_Exc_Space}, we first conjugate $H(N-1,\lambda_Nv)$ with a unitary map, which allows us to express the Hamiltonian in terms of excitations around the condensate. This conjugated Hamiltonian can then be expanded in a power series in $\lambda_N^{1/2}$. Then, in Section~\ref{sec_lemma_expansions}, we prove the corresponding expansion of $E(N-1,\lambda v)$ in Proposition~\ref{lemma_energy_expansions}. In Section~\ref{sec_explicit_computations}, we compute $E_1^\binding$ in order to compare our result in detail with \cite{nam2018}. Finally, in Section~\ref{sec_homogeneous_comp}, we compute $E_1^\binding$ and $E_2^\binding$ explicitly for the homogeneous Bose gas, i.e., we prove Theorem~\ref{theorem_homogeneous}.

\section{Proof of the Expansion}\label{sec_main_theorem_proof}

\subsection{The Hamiltonians on the Excitation Fock Space}\label{sec_Ham_Exc_Space}
We fix $\varphi$ to be the solution to the Hartree equation
\begin{equation}
\Big(T + v*|\varphi|^2 - \scp{\varphi}{(T + v*|\varphi|^2)\varphi}\Big) \varphi = 0,
\end{equation}
i.e., $\varphi$ is the minimizer of the Hartree functional \eqref{Hartree_functional}. Let us define
\begin{equation}
h(w) := T + w*|\varphi|^2 - \mu(w), ~\text{with}~ \mu(w) := \scp{\varphi}{(T + w*|\varphi|^2)\varphi},
\end{equation}
and $\eH(w) := \scp{\varphi}{\big(T+\frac{1}{2}w*|\varphi|^2\big) \varphi}$. With this notation $\varphi$ is the solution of $h(v) \varphi = 0$. The $N$-body Hamiltonian \eqref{H_N_w} with interaction $w$ can be rewritten as 
\begin{align}
H(N,w) = N \eH\big((N-1)w\big) + \sum_{j=1}^N h_j\big((N-1)w\big) + \frac{1}{N-1}\sum_{1\leq i<j \leq N} W_{ij}\big((N-1)w\big),
\end{align}
where we defined
\begin{equation}\label{W_def}
W_{ij}(w) := W(w)(x_i,x_j) := w(x_i-x_j) - \big( w * |\varphi|^2\big)(x_i) - \big( w * |\varphi|^2\big)(x_j) + \scp{\varphi}{w * |\varphi|^2 \varphi}.
\end{equation}
With these definitions, the $N$-body Hamiltonian with interaction $w = \lambda_Nv = (N-1)^{-1}v$ is
\begin{equation}\label{Ham_N_body}
H(N) := H(N,\lambda_Nv) = N \eH(v) + \sum_{j=1}^N h_j(v) + \frac{1}{N-1} \sum_{1\leq i<j \leq N} W_{ij}(v),
\end{equation}
and the $(N-1)$-body Hamiltonian with the same coupling constant $\lambda_N$ is
\begin{align}\label{Ham_N_minus_1_body}
\widetilde{H}(N-1) 
&:= H(N-1,\lambda_Nv) \nonumber\\
&= (N-1) \eH(v-\lambda_Nv) + \sum_{j=1}^{N-1} h_j(v-\lambda_Nv) + \frac{1}{N-2} \sum_{1\leq i<j \leq N-1} W_{ij}(v-\lambda_Nv),
\end{align}
where we used $(N-2)\lambda_N v = v - \lambda_N v$. In order to prove Theorem~\ref{theorem_main} we derive asymptotic expansions for the ground state energies of $H(N)$ and $\widetilde{H}(N-1)$ separately and then use the definition \eqref{def_binding_energy} of the binding energy.\footnote{The advantage of this method is that the leading order in the expansions of the ground states of $H(N)$ and $\widetilde{H}(N-1)$ is the same, and, up to a known unitary transformation, independent of $N$ (it is given by Bogoliubov theory, as explained around Equations~\eqref{H_N_on_exc_space_series} and \eqref{H_N_minus_1_on_exc_space_series}). This allows for a simple computation of all the following orders. Alternatively, one may consider treating $\widetilde{H}(N-1)$ as a perturbation of $H(N)$. However, in this approach, the ground state of the unperturbed system will depend on $N$, making computations of the following higher orders more difficult.} The expansion for $H(N)$ was already proven in \cite{spectrum}. The adaption to $\widetilde{H}(N-1)$ requires some modifications since $\widetilde{H}(N-1)$ is not equal to $H(N-1)$ due to the fact that we keep the same coupling constant $\lambda_N$ for both the Hamiltonians $H(N)$ and $\widetilde{H}(N-1)$. In the rest of this section we explain the necessary modifications. With these modifications, we then prove in Section~\ref{sec_lemma_expansions} the expansion of the ground state energy of $\widetilde{H}(N-1)$.

For $f,g \in L^2(\Omega)$, we introduce the usual creation and annihilation operators $a^*(f)$ and $a(f)$, which satisfy the CCR $[a(f),a(g)] = 0 = [a^*(f),a^*(g)]$, $[a(f),a^*(g)] = \scp{f}{g}$. For ease of notation we will often use the operator-valued distributions $a^*_x$ and $a_x$. Denoting by $\overline{f(x)}$ the complex conjugate of $f(x)$, these are defined by
\begin{equation}
a^*(f) = \int \dx f(x) a^*_x, ~~~ a(f) = \int \dx \overline{f(x)} a_x.
\end{equation}
They satisfy the CCR $[a_x,a_y] = 0 = [a^*_x,a^*_y]$ and $[a_x,a^*_y] = \delta(x-y)$. We define the second quantization of a one-body operator $A$ on $L^2(\Omega)$ with integral kernel $A(x,y)$ as
\begin{equation}
\dGamma(A) = \int \dx \dy\, \ad_x A(x,y) a_y.
\end{equation}
In particular, the excitation number operator is given by
\begin{equation}
\Number_\perp := \dGamma(q),
\end{equation}
where $q := 1 - p$ with $p := \ket{\varphi}\!\bra{\varphi}$.

Next, we perform a version of Bogoliubov's $c$-number substitution \cite{bogoliubov1947} as it was introduced in \cite{lewin2015_2}. For this, we define a unitary map
\begin{equation}
\UNp: \mathcal{H}^N_{\sym} \to \Fock_\perp^{\leq N} := \bigoplus_{k=0}^N \bigotimes_\sym^k \{ \varphi \}^\perp, \Psi \mapsto \sum_{j=0}^N q^{\oplus j}\left( \frac{a(\varphi)^{N-j}}{\sqrt{(N-j)!}} \Psi \right).
\end{equation}
We call $\UNp$ the excitation map and $\Fock_\perp^{\leq N}$ the truncated excitation Fock space. Furthermore $\Fock_\perp := \bigoplus_{k=0}^\infty \bigotimes_\sym^k \{ \varphi \}^\perp$ denotes the excitation Fock space without truncation. Note that every wave function $\Psi$ can be decomposed as 
\begin{equation}
\Psi = \sum_{k=0}^N \varphi^{\otimes (N-k)} \otimes_s \chi^{(k)}, ~~ \text{with}~~ \chi^{(k)} \in \bigotimes_\sym^k \{ \varphi \}^\perp,
\end{equation}
and that $\UNp \Psi = \big( \chi^{(0)}, \chi^{(1)}, \ldots, \chi^{(N)} \big)$. For general interactions $w$, we find by an explicit computation, similar as in \cite{lewin2015_2,spectrum} that
\begin{align}\label{Ham_exc_general}
\UNp \, H(N,w) \, \UNp^*
&= N \eH\big((N-1)w\big) + \FockH^{\exc}(N,w) + \FockH^{\extra}(N,w),
\end{align}
with
\begin{align}
\begin{split}
\FockH^{\exc}(N,w)
&= \boldKz\big((N-1)w\big) + \frac{N-\Number_\perp}{N-1}\boldKo\big((N-1)w\big) \\
&\quad + \left( \boldKt\big((N-1)w\big) \frac{\sqrt{(N-\Number_\perp)(N-\Number_\perp-1)}}{N-1} + \hc \right) \\
&\quad + \left( \boldKth\big((N-1)w\big) \frac{\sqrt{N-\Number_\perp}}{N-1} + \hc \right) +  \frac{1}{N-1} \boldKf\big((N-1)w\big),
\end{split}
\end{align}
where $\hc$ denotes the Hermitian conjugate of the preceding term, and
\begin{align}
\FockH^{\extra}(N,w)
&= \sqrt{N-\Number_\perp} a\big( q h\big((N-1)w\big) \varphi \big) + \hc.
\end{align}
Here, we have defined
\begin{subequations}
\begin{align}
\boldKz(w) &:= d\Gamma(qh(w)q), \\[5pt]  
\boldKo(w) &:= d\Gamma(K_1(w)), \\
\boldKt(w) &:= \frac{1}{2} \int \dx_1\dx_2\, K_2(w)(x_1;x_2) a_{x_1}^* a_{x_2}^*, \\
\boldKth(w) &:= \int \dx_1\dx_2\dx_3\, K_3(w)(x_1,x_2;x_3) a_{x_1}^* a_{x_2}^* a_{x_3}, \\
\boldKf(w) &:= \frac{1}{2} \int \dx_1\dx_2\dx_3\dx_4\, K_4(w)(x_1,x_2;x_3,x_4) a_{x_1}^* a_{x_2}^* a_{x_3} a_{x_4},
\end{align}
\end{subequations}
with, setting $K(w)(x,y) := \overline{\varphi(y)} w(x-y) \varphi(x)$,
\begin{subequations}
\begin{align}
K_1(w)(x_1;x_2) &:= \int \dy_1\dy_2 \, q(x_1,y_1) K(w)(y_1,y_2) q(y_2,x_2), \\
K_2(w)(x_1;x_2) &:= \int \dy_1\dy_2 \, q(x_1,y_1) q(x_2,y_2) K(w)(y_1,y_2), \\
K_3(w)(x_1,x_2;x_3) &:= \int \dy_1\dy_2 \, q(x_1,y_1) q(x_2,y_2) W(w)(y_1,y_2) \varphi(y_1) q(y_2,x_3), \\
K_4(w)(x_1,x_2;x_3,x_4) &:= \int \dy_1\dy_2 \, q(x_1,y_1) q(x_2,y_2) W(w)(y_1,y_2) q(y_1,x_3) q(y_2,x_4),
\end{align}
\end{subequations}
where $q(x,y)$ is the integral kernel of $q$ and $W$ was defined in \eqref{W_def}. 

We now map the Hamiltonians to their respective excitations spaces. For the $N$-body Hamiltonian $H(N)$ from \eqref{Ham_N_body}, Equation \eqref{Ham_exc_general} gives
\begin{align}\label{H_N_on_exc_space}
&\UNp \, H(N,\lambda_N v) \, \UNp^* \nonumber\\
&\quad = N \eH\big(v\big) + \FockH^{\exc}(N,\lambda_Nv) + \FockH^{\extra}(N,\lambda_Nv) \nonumber\\
&\quad = N \eH\big(v\big) + \boldKz(v) + \frac{N-\Number_\perp}{N-1}\boldKo(v) + \left( \boldKt(v) \frac{\sqrt{(N-\Number_\perp)(N-\Number_\perp-1)}}{N-1} + \hc \right) \nonumber\\
&\quad\quad + \left( \boldKth(v) \frac{\sqrt{N-\Number_\perp}}{N-1} + \hc \right) + \frac{1}{N-1} \boldKf(v) \nonumber\\
&\quad =: N \eH\big(v\big) + \FockH(N),
\end{align}
since $\FockH^{\extra}(N,\lambda_Nv) = 0$ due to $h(v)\varphi = 0$. For the $(N-1)$-body Hamiltonian $\widetilde{H}(N-1)$ from \eqref{Ham_N_minus_1_body}, Equation \eqref{Ham_exc_general} yields
\begin{align}\label{H_N_minus_1_on_exc_space}
&\UNmop \, H(N-1,\lambda_N v) \, \UNmop^* \nonumber\\
&\quad = (N-1) \eH\big((N-2)\lambda_Nv\big) + \FockH^{\exc}(N-1,\lambda_Nv) + \FockH^{\extra}(N-1,\lambda_Nv) \nonumber\\
&\quad = (N-1) \eH(v) - \frac{1}{2} \scp{\varphi}{(v*|\varphi|^2)\varphi} + \boldKz(v) - \lambda_N \dGamma\big( q \big[ v*|\varphi|^2 - \scp{\varphi}{v*|\varphi|^2\varphi} \big] q \big) \nonumber\\
&\quad\quad + \frac{N-\Number_\perp-1}{N-1}\boldKo(v) + \left( \boldKt(v) \frac{\sqrt{(N-1-\Number_\perp)(N-2-\Number_\perp)}}{N-1} + \hc \right) \nonumber\\
&\quad\quad + \left( \Big( \boldKth(v) - a^*\big( q \big( v*|\varphi|^2 \big) \varphi \Big) \frac{\sqrt{N-1-\Number_\perp}}{N-1} + \hc \right) +  \lambda_N \boldKf(v) \nonumber\\
&\quad =: (N-1) \eH(v) - \frac{1}{2} \scp{\varphi}{(v*|\varphi|^2)\varphi} + \widetilde{\FockH}(N-1),
\end{align}
where we used $h(v)\varphi= 0$. Note that here there is a contribution from $\FockH^{\extra}$.

Next, we expand $\FockH(N):\Fock_\perp^{\leq N} \to \Fock_\perp^{\leq N}$ and $\widetilde{\FockH}(N-1):\Fock_\perp^{\leq N-1} \to \Fock_\perp^{\leq N-1}$ in power series in $\lambda_N^{1/2}$. We begin with $\FockH(N)$. Following \cite[Def.~3.9]{spectrum}, it is convenient to extend $\FockH(N)$ to an operator on $\Fock_\perp$ as $\FockH(N) \oplus E_N^{(-1)}$, where $E_N^{(-1)} := E_N^{(0)} - (E_N^{(1)} - E_N^{(0)})$, with $E_N^{(n)}$ the eigenvalues of $\FockH(N)$. Note that $E_N^{(0)}$ is non-degenerate, so $E_N^{(-1)} < E_N^{(0)}$. We continue to denote this extended operator by $\FockH(N)$. Following \cite[Section~3.2]{spectrum}, it is furthermore convenient to treat the particle number conserving terms in $\FockH(N)$ acting on $\Fock_\perp^{>N} := \bigoplus_{k=N+1}^\infty \bigotimes_\sym^k \{ \varphi \}^\perp$ separately. Thus, we write
\begin{align}
\FockH(N) = \FockH^<(N) + \FockH^>(N),
\end{align}
with
\begin{align}
\begin{split}
\FockH^<(N) &:= \boldKz(v) + \frac{N-\Number_\perp}{N-1}\boldKo(v) + \left( \boldKt(v) \frac{\sqrt{[(N-\Number_\perp)(N-\Number_\perp-1)]_+}}{N-1} + \hc \right) \\
&\quad + \left( \boldKth(v) \frac{\sqrt{[N-\Number_\perp]_+}}{N-1} + \hc \right) + \frac{1}{N-1} \boldKf(v),
\end{split}
\end{align}
where $[\cdot]_+$ denotes the positive part, and
\begin{align}
\FockH^>(N) &:= 0 \oplus \left( E_N^{(-1)} - \boldKz(v) - \frac{N-\Number_\perp}{N-1}\boldKo(v) - \frac{1}{N-1} \boldKf(v) \right),
\end{align}
where $\oplus$ is to be understood w.r.t.\ the decomposition $\Fock_\perp = \Fock_\perp^{\leq N} \oplus \Fock_\perp^{> N}$. Here, we added in $\FockH^<(N)$ the action of the particle number conserving terms on $\Fock^{> N}$ and subtracted them again in $\FockH^>(N)$. Then, a Taylor expansion of the square roots allows us to write, for any  $a \in \N$,
\begin{align}\label{H_N_on_exc_space_series}
\FockH^<(N) = \FockH_0 + \sum_{j=1}^a \lambda_N^{j/2} \FockH_j + \lambda_N^{(a+1)/2} \FockR_a
\end{align}
as was shown in \cite[Proposition~3.12]{spectrum}. Here, $\FockH_0$ is the Bogoliubov Hamiltonian
\begin{align}\label{Bog_Ham}
\FockH_0 = \boldKz(v) + \boldKo(v) + \Big( \boldKt(v) + \hc \Big),
\end{align}
and
\begin{subequations}
\begin{align}
\FockH_1 &:= \boldKth(v) + \hc, \label{H_1_def}\\
\FockH_2 &:= - (\Number_\perp-1)\boldKo(v) - \left( \boldKt(v) \left(\Number_\perp-\frac{1}{2}\right) +\hc \right) + \boldKf(v), \\
\FockH_{2j-1} &:= c_{j-1} \left( \boldKth(v) \big( \Number_\perp-1 \big)^{j-1} + \hc \right), \\
\FockH_{2j} &:= \sum_{\nu=0}^j d_{j,\nu} \Big( \boldKt(v) \big( \Number_\perp-1 \big)^{\nu} + \hc \Big),
\end{align}
\end{subequations}
for $j \geq 2$, with coefficients
\begin{subequations}
\begin{align}
c_0^{(\ell)} &:= 1, ~~ c_j^{(\ell)} := \frac{(\ell-\frac{1}{2})(\ell+\frac{1}{2})(\ell+\frac{3}{2}) \cdots (\ell+ j-\frac{3}{2})}{j!}, ~~ c_j := c_j^{(0)} ~ (j \geq 1), \\
d_{j,\nu} &:= \sum_{\ell=0}^\nu c_\ell^{(0)} c_{\nu-\ell}^{(0)} c_{j-\nu}^{(\ell)} ~~(j \geq \nu \geq 0).
\end{align}
\end{subequations}
The remainder $\FockR_a$, defined by \eqref{H_N_on_exc_space_series}, still depends on $N$, but can be estimated uniformly in $N$ in terms of powers of the number operator, and in terms of $\Number_\perp$ and $\FockH_0$ for $a\leq 2$ if $v$ is unbounded; see \cite[Lemma~3.11 and Lemma~5.2]{spectrum}. 

We now turn to $\widetilde{\FockH}(N-1)$. Analogously to above, we extend it to an operator on $\Fock_\perp$, and write it as
\begin{align}
\widetilde{\FockH}(N-1) = \widetilde{\FockH}^<(N-1) + \widetilde{\FockH}^>(N-1),
\end{align}
with
\begin{align}
\begin{split}
\widetilde{\FockH}^<(N-1) &:= \boldKz(v) - \lambda_N \dGamma\big( q \big[ v*|\varphi|^2 - \scp{\varphi}{v*|\varphi|^2\varphi} \big] q \big) \\
&\quad + \frac{N-\Number_\perp-1}{N-1}\boldKo(v) + \left( \boldKt(v) \frac{\sqrt{[(N-1-\Number_\perp)(N-2-\Number_\perp)]_+}}{N-1} + \hc \right) \\
&\quad + \left( \Big( \boldKth(v) - a^*\big( q \big( v*|\varphi|^2 \big) \varphi \Big) \frac{\sqrt{[N-1-\Number_\perp]_+}}{N-1} + \hc \right) +  \lambda_N \boldKf(v),
\end{split}
\end{align}
and
\begin{align}\label{def_of_Fock_tilde_bigger}
\begin{split}
\widetilde{\FockH}^>(N-1) &:= 0 \oplus \Big( \widetilde{E}_{N-1}^{(-1)} - \boldKz(v) + \lambda_N \dGamma\big( q \big[ v*|\varphi|^2 - \scp{\varphi}{v*|\varphi|^2\varphi} \big] q \big) \\
&\qquad\qquad - \frac{N-\Number_\perp-1}{N-1}\boldKo(v) - \frac{1}{N-1} \boldKf(v) \Big),
\end{split}
\end{align}
where here $\oplus$ is to be understood w.r.t.\ the decomposition $\Fock_\perp = \Fock_\perp^{\leq N-1} \oplus \Fock_\perp^{> N-1}$, and $\widetilde{E}_{N-1}^{(n)}$ denote the eigenvalues of $\widetilde{\FockH}(N-1)$, with $\widetilde{E}_{N-1}^{(-1)} := \widetilde{E}_{N-1}^{(0)} - (\widetilde{E}_{N-1}^{(1)} - \widetilde{E}_{N-1}^{(0)})$. We then expand $\widetilde{\FockH}^<(N-1)$ for any  $a \in \N$ as
\begin{align}\label{H_N_minus_1_on_exc_space_series}
\widetilde{\FockH}^<(N-1) = \FockH_0 + \sum_{j=1}^a \lambda_N^{j/2} \widetilde{\FockH}_j + \lambda_N^{(a+1)/2} \widetilde{\FockR}_a,
\end{align}
where
\begin{subequations}\label{tilde_Hamiltonians}
\begin{align}
\widetilde{\FockH}_1 &:= \Big( \boldKth(v) - a^*\big( q \big( v*|\varphi|^2 \big) \varphi \big) \Big) + \hc, \\
\begin{split}
\widetilde{\FockH}_2 &:= - \dGamma\big( q \big[ v*|\varphi|^2 - \scp{\varphi}{v*|\varphi|^2\varphi} \big] q \big) \\
&\quad - \Number_\perp\boldKo(v) - \left( \boldKt(v) \left(\Number_\perp+\frac{1}{2}\right) +\hc \right) + \boldKf(v),
\end{split} 
\\
\widetilde{\FockH}_{2j-1} &:= c_{j-1} \Big( \boldKth(v) - a^*\big( q \big( v*|\varphi|^2 \big) \varphi \big) \Big) \Number_\perp^{j-1} + \hc, \\
\widetilde{\FockH}_{2j} &:= \sum_{\nu=0}^j d_{j,\nu} \Big( \boldKt(v) \Number_\perp^{\nu} + \hc \Big)
\end{align}
\end{subequations}
for $j \geq 2$. The remainder $\widetilde{\FockR}_a$ can be bounded analogously to $\FockR_a$, in particular uniformly in $N$, as we will explain in the proof of Proposition~\ref{lemma_energy_expansions}. Note that the leading order term $\FockH_0$ is the same in the expansions \eqref{H_N_on_exc_space_series} and \eqref{H_N_minus_1_on_exc_space_series}. The $\widetilde{\FockH}_j$ Hamiltonians differ from the $\FockH_j$ in the following way:
\begin{itemize}
\item $\boldKth(v)$ is replaced by $\boldKth(v) - a^*\big( q \big( v*|\varphi|^2 \big) \varphi \big) =: \widetilde{\boldK}_3(v)$,
\item an extra term $- \dGamma\big( q \big[ v*|\varphi|^2 - \scp{\varphi}{v*|\varphi|^2\varphi} \big] q \big)$ is added for $j=2$,
\item $\Number_\perp$ is replaced by $\Number_\perp+1$.
\end{itemize}
Note that the formulas \eqref{tilde_Hamiltonians} can be simplified by using the properties of the coefficients $c_j$ and $d_{j,\nu}$. Equivalently we could use the fact that \eqref{tilde_Hamiltonians} can be obtained from replacing $N \to N-1$, $v \to \frac{N-2}{N-1}v$, and $\boldKth \to \widetilde{\boldK}_3$ in the Taylor expansion of $\FockH^<(N)$ from \eqref{H_N_on_exc_space_series} in all terms except the constant terms and those involving $\boldKz$. Then the $\widetilde{\FockH}_j$ for $j=1$ and $j \geq 3$ can be expressed in terms of the $\FockH_j$ if one additionally replaces $\boldKth$ by $\widetilde{\boldK}_3$ wherever it occurs. For example, we find for $j=1,2,3,4$ that
\begin{subequations}\label{H_tilde_1_4}
\begin{align}
\widetilde{\FockH}_1 &= \FockH_1 \big|_{\boldKth \to \widetilde{\boldK}_3} \label{def_H_tilde_1_and_2}, &&\widetilde{\FockH}_2
= \FockH_2 - \FockH_0 + \dGamma\big( q \big[ T - \scp{\varphi}{T\varphi} \big] q \big), \\
\widetilde{\FockH}_3 &= \Big( \FockH_3 - \frac{1}{2} \FockH_1 \Big) \Big|_{\boldKth \to \widetilde{\boldK}_3}, &&\widetilde{\FockH}_4 = \FockH_4. \label{def_H_tilde_3_and_4}
\end{align}
\end{subequations}

\subsection{Expansions of the Ground State Energies}\label{sec_lemma_expansions}
One of the main results of \cite{spectrum} is an expansion of the ground state energy of $H(N,\lambda_N v)$ in powers of $\lambda_N$. Using the computations from Section~\ref{sec_Ham_Exc_Space} we can adapt this result to yield an expansion of the ground state energy of $H(N-1,\lambda_N v)$ in powers of $\lambda_N$ as well. We denote the unique ground state of $\FockH_0$ from Equation~\eqref{Bog_Ham} by $\Chi_0$, its ground state energy by $E_0$, introduce the projections
\begin{equation}
\FockP_0 := | \Chi_0 \rangle \langle \Chi_0 |, ~~\FockQ_0 := 1- \FockP_0,
\end{equation}
and define
\begin{equation}\label{FockO}
\FockO_k:=\begin{cases}
\displaystyle\quad-\Pz &\quad k=0\,,\\[7pt]
\displaystyle\frac{\Qz}{\big(E_0-\FockHz\big)^k} &\quad k>0\,.
\end{cases}
\end{equation}
Then the following holds.

\begin{proposition}\label{lemma_energy_expansions}
Let $a \in \N_0$ and let Assumptions~\ref{Assumption_1}, \ref{Assumption_2}, and \ref{Assumption_3} hold. Then for sufficiently large $N$ there exist $C(a)>0$ such that
\begin{equation}\label{expansion_E_N}
\left| E(N,\lambda_N v) - N \eH - E_0 - \sum_{\ell=1}^a \lambda_N^\ell E_{\ell} \right| \leq C(a) \lambda_N^{a+1},
\end{equation}
with
\begin{equation}\label{def_E_l}
E_{\ell} = \sum_{\nu = 1}^{2\ell} \sum_{\substack{\bj \in \N^\nu \\ |\bj| = 2\ell}} \sum\limits_{\substack{\bm\in\N_0^{\nu-1}\\|\bm|=\nu-1}}\frac{1}{\kappa(\bm)} \SCP{\Chi_0}{\FockHjo\FockO_{m_1}\mycdots\FockH_{j_{\nu-1}}\FockO_{m_{\nu-1}}\FockHjnu \Chi_0},
\end{equation}
where $\kappa(\bm):=1+\left|\left\{\mu\,:\, m_\mu=0\right\}\right| \in\{1\mydots \nu-1\}$ is the number of operators $\Pz$ within the scalar product. Furthermore,
\begin{equation}\label{expansion_E_N_minus_1}
\left| E(N-1,\lambda_N v) - (N-1) \eH + \frac{1}{2} \scp{\varphi}{(v*|\varphi|^2)\varphi} - E_0 - \sum_{\ell=1}^a \lambda_N^\ell \widetilde{E}_{\ell} \right| \leq C(a) \lambda_N^{a+1},
\end{equation}
where $\widetilde{E}_{\ell}$ is defined by Equation~\eqref{def_E_l} with $\FockH_j$ replaced by $\widetilde{\FockH}_j$. The coefficients from Theorem~\ref{theorem_main} are given by $E^{\mathrm{binding}}_j = E_j - \widetilde{E}_j$.
\end{proposition}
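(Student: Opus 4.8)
The expansion \eqref{expansion_E_N} for $E(N,\lambda_N v)$ is precisely the ground state energy expansion already established in \cite{spectrum}, and the formula \eqref{def_E_l} is nothing but the Rayleigh--Schrödinger perturbation series for the ground state energy of $\FockH_0 + \sum_{j\geq 1}\lambda_N^{j/2}\FockH_j$ collected at order $\lambda_N^\ell$: the factors $\FockO_{m_i}$ with $m_i>0$ are the powers of the reduced resolvent $\Qz(E_0-\FockH_0)^{-1}$, while $\FockO_0=-\Pz$ accounts for the subtractions, and $1/\kappa(\bm)$ is the standard combinatorial weight. So the only genuinely new content of the Proposition is the expansion \eqref{expansion_E_N_minus_1} for $\widetilde H(N-1)$; the binding formula then follows by subtraction. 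The plan is to replay the argument of \cite{spectrum} for $\widetilde{\FockH}(N-1)$, and the crucial leverage is that the leading Bogoliubov operator $\FockH_0$ in \eqref{H_N_minus_1_on_exc_space_series} is \emph{identical} to the one in \eqref{H_N_on_exc_space_series}, so that the ground state energy of $\widetilde{\FockH}(N-1)$ is a perturbation of the same $N$-independent $E_0$, only with each $\FockH_j$ replaced by the modified $\widetilde{\FockH}_j$ from \eqref{tilde_Hamiltonians}.

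Concretely, starting from \eqref{H_N_minus_1_on_exc_space}, which already isolates the $N$-independent constant $(N-1)\eH - \frac12\scp{\varphi}{(v*|\varphi|^2)\varphi}$, I would apply the perturbation theory of \cite[Section~3.2]{spectrum} to the extended operator $\widetilde{\FockH}(N-1)$ on the full space $\Fp$. Since $\FockH_0$ from \eqref{Bog_Ham} has a non-degenerate, isolated ground state $\Chi_0$ with energy $E_0$ and an $N$-independent spectral gap, the series produces an asymptotic expansion of the ground state energy in powers of $\lambda_N^{1/2}$ with the coefficient of $\lambda_N^\ell$ given by \eqref{def_E_l} with $\FockH_j\to\widetilde{\FockH}_j$. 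The half-integer powers drop out by the same parity argument as in \cite{spectrum}: a coefficient of $\lambda_N^{(2m-1)/2}$ carries $|\bj|$ odd, hence an odd number of parity-changing factors (the $\boldKth$/$\widetilde{\boldK}_3$-type operators sit exactly in the odd-index $\widetilde{\FockH}_j$, the number-even operators in the even-index ones), so its expectation in the even state $\Chi_0$ vanishes. The truncation to $\Fock_\perp^{\leq N-1}$ enters only through $\widetilde{\FockH}^>(N-1)$ in \eqref{def_of_Fock_tilde_bigger}, whose contribution is negligible at every finite order because $\Chi_0$ decays rapidly in the excitation number $\Number_\perp$.

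The main obstacle will be the uniform-in-$N$ control of the remainder $\widetilde{\FockR}_a$ in \eqref{H_N_minus_1_on_exc_space_series} together with the truncation term \eqref{def_of_Fock_tilde_bigger}, which is exactly what turns the formal series into a genuine asymptotic expansion with $N$-independent constant $C(a)$. Relative to \cite{spectrum} the only new operators are the linear term $-a^*\big(q(v*|\varphi|^2)\varphi\big)$ absorbed into $\widetilde{\boldK}_3$ and the one-body term $-\dGamma\big(q[v*|\varphi|^2-\scp{\varphi}{v*|\varphi|^2\varphi}]q\big)$ appearing at order $j=2$. Since $v*|\varphi|^2$ is a bounded (respectively $\FockH_0$-form-bounded) one-body potential and the $a^*$ term is linear in a fixed $L^2$ vector, both satisfy the same $\Number_\perp$- and $\FockH_0$-relative bounds that \cite[Lemma~3.11 and Lemma~5.2]{spectrum} use to estimate $\FockR_a$. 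I would therefore check that these bounds propagate through the identical chain of estimates, so that $\widetilde{\FockR}_a$ is controlled uniformly in $N$ exactly as $\FockR_a$ is, and likewise that the rapid $\Number_\perp$-decay of $\Chi_0$ makes $\widetilde{\FockH}^>(N-1)$ contribute only at order $\lambda_N^{a+1}$.

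Finally, subtracting \eqref{expansion_E_N_minus_1} from \eqref{expansion_E_N} and invoking the definition \eqref{def_binding_energy}, the two Bogoliubov energies $E_0$ cancel while the remaining constants combine to $\eH + \frac12\scp{\varphi}{(v*|\varphi|^2)\varphi}$, which is the coefficient $E_0^\binding$ stated in Theorem~\ref{theorem_main}; at every order $\ell\geq 1$ the coefficient of $\lambda_N^\ell$ is $E_\ell-\widetilde{E}_\ell$, giving $E_\ell^\binding = E_\ell-\widetilde{E}_\ell$, and the two error terms add to $O(\lambda_N^{a+1})$.
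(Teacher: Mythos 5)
Your proposal is correct and takes essentially the same route as the paper: \eqref{expansion_E_N} is quoted from \cite[Theorem~2]{spectrum}, and \eqref{expansion_E_N_minus_1} is obtained by rerunning that Rayleigh--Schr\"odinger argument for $\widetilde{\FockH}(N-1)$, using that the leading operator $\FockH_0$ is unchanged and checking that the new terms (the linear term absorbed into $\widetilde{\boldK}_3$ and the extra one-body term in $\widetilde{\FockH}_2$) satisfy the same $(\Number_\perp+1)$-bounds used in \cite[Lemmas~5.2 and 5.3]{spectrum} to control $\FockH_j$ and $\FockR_a$. The only cosmetic difference is the truncation term: you argue $\widetilde{\FockH}^>(N-1)$ is negligible at every order via the $\Number_\perp$-decay of the relevant states, while the paper notes that \eqref{def_of_Fock_tilde_bigger} is defined precisely so that \cite[Proposition~3.14]{spectrum} applies and $\widetilde{\FockH}^>(N-1)$ does not contribute to the ground state projection --- the same mechanism in substance.
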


Note that $\Chi_0$ and $\FockO_k$ are the same in the formulas for $E_\ell$ and $\widetilde{E}_{\ell}$, since the leading order of both $\FockH^<(N)$ and $\widetilde{\FockH}^<(N-1)$ is the same, namely $\FockHz$.

\begin{proof}
The estimate \eqref{expansion_E_N} is proven in \cite[Theorem~2]{spectrum}. It is based on Rayleigh--Schr\"odinger perturbation theory applied to $\FockH(N)$. More exactly, a rigorous expansion of the projection $\FockP$ on the ground state of $\FockH(N)$ is proven, and based on that an expansion of the ground state energy $E = \Tr\, \FockH(N) \FockP$. The estimate \eqref{expansion_E_N_minus_1} can be obtained with the same strategy, but here the underlying Hamiltonian $\widetilde{\FockH}(N-1)$ is different. In order to make the proof from \cite{spectrum} work, two things have to be checked:
\begin{enumerate}[label={(\alph*)}]
\item \textbf{Estimates for $\widetilde{\FockH}_j$ and $\widetilde{\FockR}_a$.} First, note that \cite[Lemma~5.2]{spectrum} still holds when we replace $\boldKth$ by $\widetilde{\boldK}_3$ and $\FockH(N)$ by $\widetilde{\FockH}(N-1)$, i.e., we still have
\begin{subequations}
\begin{align}
\| \widetilde{\boldK}_3^{(*)} \phi \| &\leq C \| (\Number_\perp +1)^{3/2} \phi \|, \\
\Big\| \Big[ \widetilde{\FockH}^<(N-1), (\Number_\perp+1)^\ell \Big] \phi \Big\|_{\Fock_\perp^{\leq N-1}} &\leq C(\ell) \| (\Number_\perp+1)^\ell \phi \|_{\Fock_\perp^{\leq N-1}},
\end{align}
\end{subequations}
for some $C > 0$ and $C(\ell) >0$, and for all $\phi \in \Fock_\perp$. Additionally, we have
\begin{align}
\big\| \widetilde{\FockH}_2 \phi \big\| \leq C \| (\Number_\perp + 1)^2 \phi \|
\end{align}
for some $C>0$ and for all $\phi \in \Fock_\perp$, so we can use the same bounds for $\widetilde{\FockH}_2$ as we have used for $\FockH_2$ in \cite{spectrum}. Since \cite[Lemma~5.3~(a)]{spectrum} is proven directly by using \cite[Lemma~5.2]{spectrum}, it also holds when $\FockH_j$ is replaced by $\widetilde{\FockH}_j$. The estimates for $\widetilde{\FockR}_a$ are obtained analogously.
\item \textbf{Occurrence of $\widetilde{\FockH}^>(N-1)$.} In Equation~\eqref{def_of_Fock_tilde_bigger} we have defined $\widetilde{\FockH}^>(N-1)$ in such a way that \cite[Proposition~3.14]{spectrum} can be applied, meaning that the operator $\widetilde{\FockH}^>(N-1)$ does not contribute to $\FockP$.
\end{enumerate}
Thus, the proof of \cite[Theorem~2]{spectrum} still works when we replace $\FockH(N)$ by $\widetilde{\FockH}(N-1)$, meaning that \eqref{expansion_E_N_minus_1} holds.
\end{proof}

\section{Explicit Computations}\label{sec_explicit_computations}

We use the notation
\begin{equation}\label{H_1_extra_def}
\FockH_1^\extra := \FockH_1 - \widetilde{\FockH}_1 = a^*\big( q \big( v*|\varphi|^2 \big) \varphi \big) + \hc,
\end{equation}
and abbreviate $\FockO := \FockO_1$.

\subsection{The Trapped Bose Gas}\label{sec_next_order_Nam}

In this section we compute $E_1^\binding$ for the trapped Bose gas and compare the result with Nam's conjecture \cite[Conjecture~6]{nam2018}. For $\ell = 1$, the formula \eqref{def_E_l} is $E_1 = \scp{\Chi_0}{\FockH_2 \Chi_0} + \scp{\Chi_0}{\FockH_1 \FockO \FockH_1 \Chi_0}$. 
Thus, using the formulas \eqref{def_H_tilde_1_and_2}, we find
\begin{align}\label{E_1_binding}
E_1^\binding &= E_1 - \widetilde{E}_1 \nonumber\\
\begin{split}
&= \scp{\Chi_0}{\FockH_0 \Chi_0} - \scp{\Chi_0}{\dGamma\big( q \big[ T - \scp{\varphi}{T\varphi} \big] q \big) \Chi_0} + 2 \Re \scp{\Chi_0}{\FockH_1^\extra \FockO \FockH_1 \Chi_0} \\
&\quad - \scp{\Chi_0}{\FockH_1^\extra \FockO \FockH_1^\extra \Chi_0}.
\end{split}
\end{align}
The main part of the conjecture is that
\begin{align}\label{def_coeff_C}
E_1^\binding = C := \scp{\Chi_0}{\FockH_0 \Chi_0} - \scp{\Chi_0}{\dGamma\big( q \big[ T - \scp{\varphi}{T\varphi} \big] q \big) \Chi_0},
\end{align}
which does not in general agree with the correct expression \eqref{E_1_binding}. Note, however, that it does agree for the homogeneous Bose gas, since then $\FockH_1^\extra = 0$. 

The discrepancy can be explained as follows. Let $\Chi$ denote the ground state of $\FockH(N)$. Our results imply that
\begin{equation}
B_N := 2 \Re \scp{\Chi}{a^*\big( q \big( v*|\varphi|^2 \big) \varphi \big) \Chi} = N^{-1/2} 2 \Re \scp{\Chi_0}{\FockH_1^\extra \FockO \FockH_1 \Chi_0} + O(N^{-3/2}),
\end{equation}
i.e, this term is $O(N^{-1/2})$. This is in contrast to the prediction $B_N = o(N^{-1/2})$ from \cite{nam2018}. Moreover, a closer look at the estimates in \cite{nam2018} reveals that two bounds are proven, namely
\begin{subequations}
\begin{align}
E_1^\binding &\geq C + N^{1/2} B_N + o(1), \\
E_1^\binding &\leq C + N^{1/2} B_N + 2D + o(1),
\end{align}
\end{subequations}
where $D := - \scp{\Chi_0}{\FockH_1^\extra \FockO \FockH_1^\extra \Chi_0} \geq 0$. The correct expression in the limit $N\to\infty$, however, is as in \eqref{E_1_binding}, i.e., $E_1^\binding = C + N^{1/2} B_N + D$.

\subsection{The Homogeneous Bose Gas}\label{sec_homogeneous_comp}
For the homogeneous Bose gas $\varphi(x) = 1$, which implies $v*|\varphi|^2 = \hat{v}(0)$, $q \big( v*|\varphi|^2 \big) \varphi = 0$ and thus $\FockH_1^\extra = 0$, and $T\varphi= 0$. Then the formulas \eqref{H_tilde_1_4} simplify to
\begin{align}\label{H_tilde_1_4_torus}
\tildeFockH_1 = \FockH_1,\qquad \tildeFockH_2 = \FockH_2 - \FockH_0 + \dGamma(qTq),\qquad \tildeFockH_3 = \FockH_3 - \frac{1}{2} \FockH_1,\qquad \tildeFockH_4 = \FockH_4.
\end{align}
Thus Equation~\eqref{E_1_binding} becomes
\begin{align}\label{E_1_binding_torus}
E_1^\binding = \scp{\Chi_0}{\big( \FockH_0 - \dGamma(qTq) \big) \Chi_0} = E_0 - \scp{\Chi_0}{\dGamma(qTq)\Chi_0}.
\end{align}
In order to compute $E_2^\binding$, note that Equation~\eqref{def_E_l} for $\ell = 2$ can be written as
\begin{align}
\begin{split}
E_2 &= \scp{\Chi_0}{\FockH_4 \Chi_0} + \scp{\Chi_0}{\FockH_3 \FockO \FockH_1 \Chi_0} + \scp{\Chi_0}{\FockH_1 \FockO \FockH_3 \Chi_0} + \scp{\Chi_0}{\FockH_2 \FockO \FockH_2 \Chi_0} \\
&\quad + \scp{\Chi_0}{\FockH_2 \FockO \FockH_1 \FockO \FockH_1 \Chi_0} + \scp{\Chi_0}{\FockH_1 \FockO (\FockH_2 - E_1) \FockO \FockH_1 \Chi_0} + \scp{\Chi_0}{\FockH_1 \FockO \FockH_1 \FockO \FockH_2 \Chi_0} \\
&\quad + \scp{\Chi_0}{\FockH_1 \FockO \FockH_1 \FockO \FockH_1 \FockO \FockH_1 \Chi_0}.
\end{split}
\end{align}
Then a computation using \eqref{H_tilde_1_4_torus}, $\FockH_0 \Chi_0 = E_0 \Chi_0$, $\FockO \Chi_0 = 0$, and $\FockH_0 \FockO = \FockH_0 \frac{\FockQ_0}{E_0 - \FockH_0} = - \FockQ_0 + E_0 \FockO$ yields
\begin{align}\label{E_2_binding}
E_2^\binding &:= E_2 - \widetilde{E}_2 \nonumber\\
\begin{split}
&= - 2 \Re \scp{\Chi_0}{\dGamma(qTq)\FockO \FockH_2 \Chi_0} - 2 \Re \scp{\Chi_0}{\dGamma(qTq) \FockO \FockH_1 \FockO \FockH_1 \Chi_0} \\
&\quad - \scp{\Chi_0}{\dGamma(qTq) \FockO \dGamma(qTq) \Chi_0} - \scp{\Chi_0}{\FockH_1 \FockO \Big( \dGamma(qTq) - \scp{\Chi_0}{\dGamma(qTq)\Chi_0} \Big) \FockO \FockH_1 \Chi_0}.
\end{split}
\end{align}
In the rest of this section all summations are over the lattice $(2\pi \Z)^d$. In Fourier representation, the operators $\FockH_0$, $\dGamma(qTq)$, $\FockH_1$, and $\FockH_2$ read
\begin{subequations}
\begin{align}
\FockH_0 &= \sum_{k \neq 0} \big( k^2 +  \hat{v}(k) \big) \ad_k a_k + \frac{1}{2} \sum_{k \neq 0} \hat{v}(k) \big( \ad_k \ad_{-k} + a_k a_{-k} \big), \label{H_0_torus}\\
\dGamma(qTq) &= \sum_{k \neq 0} k^2 \ad_k a_k, \label{T_torus}\\
\FockH_1 &= \sum_{\substack{k,\ell \neq 0 \\ k + \ell \neq 0}} \hat{v}(k) \ad_k\ad_\ell a_{k+\ell} + \hc, \\
\begin{split}
\FockH_2 &= - \sum_{k,\ell \neq 0} \hat{v}(k) \ad_\ell\ad_k a_\ell a_k -\frac{1}{2} \left( \sum_{k\neq 0} \hat{v}(k)\ad_k\ad_{-k} \bigg( \sum_{\ell \neq 0} \ad_\ell a_\ell - \frac{1}{2} \bigg) + \hc \right) \\
&\quad + \frac{1}{2}\sum_{\substack{j,k,\ell \neq 0 \\ j-\ell \neq 0, j+k-\ell\neq 0}} \hat{v}(j-\ell) \ad_j\ad_k a_\ell a_{j+k-\ell}.
\end{split}
\end{align}
\end{subequations}
Furthermore, the Bogoliubov transformation $U_B$ that diagonalizes the Bogoliubov Hamiltonian $\FockH_0$ acts on creation and annihilation operators in the following way. For $p\neq 0$,
\begin{subequations}
\begin{align}
U_B a_p U_B^* &= \sigma_p a_p - \gamma_p \ad_{-p}, \\
U_B \ad_p U_B^* &= \sigma_p \ad_p - \gamma_p a_{-p},
\end{align}
\end{subequations}
with $\sigma_p$ and $\gamma_p$ defined in \eqref{epsilon_alpha_sigma_gamma}. Then
\begin{equation}
U_B \FockH_0 U_B^* = E_0 + \sum_{k \neq 0} \varepsilon(k) \ad_k a_k, ~~\text{with}~~ \varepsilon(k) = \sqrt{k^4 + 2 k^2 \hat{v}(k)},
\end{equation}
so the ground state of $\FockH_0$ is $\ket{\Chi_0} = U_B^* \ket{\Omega}$. The unitary map $U_B$ consequently diagonalizes $\FockO$ as well and we find
\begin{align}
U_B \FockO U_B^* \, \ad_{p_1} \ldots \ad_{p_n}\ket{\Omega} = - \frac{1}{\varepsilon(p_1) + \ldots + \varepsilon(p_n)} \, \ad_{p_1} \ldots \ad_{p_n} \ket{\Omega}
\end{align}
for all $0 \neq p_1,\ldots,p_n \in (2\pi \Z)^d$. We can now compute $E_1^\binding$ and $E_2^\binding$ explicitly.

\vspace{2mm}
\noindent\textbf{Computation of $E_1^\binding$.} Using \eqref{H_0_torus} and \eqref{T_torus} in \eqref{E_1_binding_torus} we find
\begin{align}\label{E_1_binding_torus_computed}
E_1^\binding &= \sum_{k \neq 0} \scp{\Omega}{U_B \Big( \hat{v}(k) \ad_k a_k + \frac{1}{2} \hat{v}(k) \big( \ad_k \ad_{-k} + a_k a_{-k} \big) \Big) U_B^* \Omega} 
= - \sum_{k \neq 0} \hat{v}(k) \frac{\alpha_k}{1+\alpha_k}.
\end{align}
We can now either use a direct computation based on the definition of $\alpha_p$ to show that \eqref{E_1_binding_theorem} holds, or we directly compute with \eqref{T_torus} that
\begin{align}\label{E_1_binding_torus_computed_alt}
E_1^\binding &= E_0 - \sum_{k \neq 0} k^2 \scp{\Omega}{U_B \ad_k a_k U_B^* \Omega} = E_0 - \sum_{k \neq 0} k^2 \gamma_k^2 = E_0 - \sum_{k \neq 0} \frac{k^2 \alpha_k^2}{1-\alpha_k^2}.
\end{align}

\vspace{2mm}
\noindent\textbf{Computation of $E_2^\binding$.} We compute each term in \eqref{E_2_binding} separately. First, note that
\begin{align}
U_B \dGamma(qTq) U_B^* = \sum_{k \neq 0} k^2 \Big( \big(\sigma_k^2+\gamma_k^2\big) \ad_k a_k - \sigma_k\gamma_k \ad_k \ad_{-k} - \sigma_k\gamma_k a_{-k} a_k  + \gamma_k^2 \Big).
\end{align}
Then, in order to compute 
\begin{align}
\scp{\Chi_0}{\dGamma(qTq)\FockO \FockH_2 \Chi_0} =\sum_{\ell \neq 0} \ell^2 \sigma_\ell (-\gamma_\ell)\, \scp{\ad_\ell\ad_{-\ell} \Omega}{(U_B\FockO U_B^*) U_B\FockH_2 U_B^* \Omega},
\end{align}
we only need to know the part of $U_B \FockH_2 U_B^*$ with two $\ad$ operators, since $U_B\FockO U_B^*$ is particle-number conserving. We find 
\begin{align}
\scp{\ad_k \ad_{-k} \Omega}{U_B \FockH_2 U_B^* \Omega} = 2 H^{\QP}_{2,\ad\ad}(k)
\end{align}
with
\begin{align}
\begin{split}
H^{\QP}_{2,\ad\ad}(k) &= - \frac{1}{2} \sum_{\substack{\ell \neq 0 \\ \ell \neq k}} \hat{v}(k-\ell) \gamma_\ell \Big( \sigma_k^2 \sigma_\ell + 2 \sigma_k \gamma_\ell \gamma_k + \sigma_\ell \gamma_k^2 \Big) - \frac{1}{2} \hat{v}(k) (\sigma_k - \gamma_k)^2 \sum_{\ell \neq 0} \gamma_\ell^2 \\
&\quad - \sigma_k \gamma_k \sum_{\ell \neq 0} \hat{v}(\ell) \gamma_\ell (\sigma_\ell - \gamma_\ell) + \hat{v}(k) \gamma_k (\sigma_k - \gamma_k)^3 + \frac{1}{4} \hat{v}(k) \big( \sigma_k^2 + \gamma_k^2 \big).
\end{split}
\end{align}
This yields
\begin{align}\label{hom_T_H_2_term}
- 2 \Re \scp{\Chi_0}{\dGamma(qTq)\FockO \FockH_2 \Chi_0} = -2 \sum_{k \neq 0} k^2 \gamma_k \sigma_k \frac{H^{\QP}_{2,\ad\ad}(k)}{\varepsilon(k)}.
\end{align}
Next, we compute directly that
\begin{align}\label{hom_T_T_term}
- \scp{\Chi_0}{\dGamma(qTq) \FockO \dGamma(qTq) \Chi_0} = \sum_{k \neq 0} \frac{k^4 \sigma_k^2 \gamma_k^2}{\varepsilon(k)}.
\end{align}
For the computation of the remaining terms, note first that
\begin{align}
U_B \FockH_1 U_B^* = \sum_{\substack{k,\ell \neq 0 \\ k + \ell \neq 0}} \left( H^\QP_{1,\ad\ad\ad}(k,\ell) \ad_k\ad_\ell \ad_{-k-\ell} + H^\QP_{1,\ad a a}(k,\ell) \ad_{k+\ell} a_k a_\ell \right) +\hc,
\end{align}
where $H^\QP_{1,\ad\ad\ad}(k,\ell)$ and $H^\QP_{1,\ad a a}(k,\ell)$ can be written in symmetrical form as
\begin{subequations}
\begin{align}
H^\QP_{1,\ad\ad\ad}(k,\ell) &= - \frac{1}{6} \Bigg[ \hat{v}(k) \big(\gamma_{k+\ell} \sigma_\ell + \sigma_{k+\ell} \gamma_\ell\big)\big(\sigma_k - \gamma_k\big) + \hat{v}(\ell) \big(\gamma_{k+\ell} \sigma_k + \sigma_{k+\ell} \gamma_k\big)\big(\sigma_\ell - \gamma_\ell\big) \nonumber\\
&\qquad + \hat{v}(k+\ell) \big(\sigma_\ell\gamma_k + \sigma_k\gamma_\ell\big)\big(\sigma_{k+\ell} - \gamma_{k+\ell}\big) \Bigg],
\\
H^\QP_{1,\ad a a}(k,\ell) &= \frac{1}{2} \Bigg[ \hat{v}(k) \big(\sigma_{k+\ell} \sigma_\ell + \gamma_{k+\ell} \gamma_\ell\big)\big(\sigma_k - \gamma_k\big) + \hat{v}(\ell) \big(\sigma_{k+\ell} \sigma_k + \gamma_{k+\ell} \gamma_k\big)\big(\sigma_\ell - \gamma_\ell\big) \nonumber\\
&\qquad - \hat{v}(k+\ell) \big(\sigma_\ell\gamma_k + \sigma_k\gamma_\ell\big)\big(\sigma_{k+\ell} - \gamma_{k+\ell}\big) \Bigg].
\end{align}
\end{subequations}
With that we find
\begin{align}\label{hom_T_H_1_H_1_term}
&- 2 \Re \scp{\Chi_0}{\dGamma(qTq) \FockO \FockH_1 \FockO \FockH_1 \Chi_0} \nonumber\\
&\quad = 12  \sum_{\substack{k,\ell \neq 0 \\ k + \ell \neq 0}} (k+\ell)^2 \sigma_{k+\ell} \gamma_{k+\ell} \left(\frac{H^\QP_{1,\ad a a}(k,\ell)}{\varepsilon(k+\ell)}\right) \left(\frac{H^\QP_{1,\ad\ad \ad}(k,\ell)}{\varepsilon(k) + \varepsilon(\ell) + \varepsilon(k+\ell)}\right).
\end{align}
Furthermore,
\begin{align}\label{hom_H_1_T_H_1_term}
&- \scp{\Chi_0}{\FockH_1 \FockO \Big( \dGamma(qTq) - \scp{\Chi_0}{\dGamma(qTq)\Chi_0} \Big) \FockO \FockH_1 \Chi_0} \nonumber\\
&\quad = - 18 \sum_{\substack{k,\ell \neq 0 \\ k + \ell \neq 0}} (k+\ell)^2 \big(\sigma_{k+\ell}^2 + \gamma_{k+\ell}^2\big) \Bigg( \frac{H^\QP_{1,\ad\ad\ad}(k,\ell)}{\varepsilon(k) + \varepsilon(\ell) + \varepsilon(k+\ell)} \Bigg)^2.
\end{align}
Adding up \eqref{hom_T_H_2_term}, \eqref{hom_T_T_term}, \eqref{hom_T_H_1_H_1_term}, and \eqref{hom_H_1_T_H_1_term} yields the expression \eqref{E_2_binding_theorem} from Theorem~\ref{theorem_homogeneous}.

\section*{Acknowledgments}
It is a pleasure to thank Phan Th\`anh Nam for helpful discussions on bosonic atoms. L.B.\ was supported by the  German Research Foundation within the Munich Center of Quantum Science and Technology (EXC 2111). N.L.\ gratefully acknowledges support from the Swiss National Science Foundation through the NCCR SwissMap and funding from the European Union's Horizon 2020 research and innovation programme under the Marie Sk\l odowska-Curie grant agreement N\textsuperscript{o} 101024712. S.P.\ acknowledges funding by the Deutsche Forschungsgemeinschaft (DFG, German Research Foundation) – project number 512258249.

\vspace{3mm}

\noindent \textbf{Data Availability.} Data sharing not applicable to this article as no datasets were generated or analyzed during the current study.

\vspace{3mm}

\noindent \textbf{Conflict of interest.} The authors have no relevant financial or non-financial interests to disclose.

\bibliographystyle{abbrv}
    \bibliography{bib_file}

\end{document}